\def\thm@space@setup{%
	\thm@preskip=7pt
	\thm@postskip=\thm@preskip 
}
\theoremstyle{plain}%
\newtheorem{theorem}{Theorem}%
\newtheorem*{example}{Example}%
\newtheorem{prob}{Problem}
\newtheorem{prop}{Proposition}
\newtheorem{corol}{Corollary}
\newtheorem{lemma}{Lemma}
\numberwithin{equation}{section}
\newcommand{\Real}{\mathbb{R}}
\begin{document}

\title[Optimizing antimicrobial treatment schedules]{Optimizing antimicrobial treatment schedules: some fundamental analytical results}


\author{\fnm{Guy} \sur{Katriel}}\email{katriel@braude.ac.il}



\affil{\orgdiv{Department of Applied Mathematics}, \orgname{Braude College of Engineering}, \orgaddress{\city{Karmiel}, \country{Israel}}}


\abstract{This work studies fundamental questions 
	regarding the optimal design 
	of antimicrobial treatment protocols,
	using pharmacodynamic and pharmacokinetic mathematical models.
	We consider the problem of designing
	an antimicrobial treatment schedule
	to achieve eradication of a microbial 
	infection, while minimizing the area under the time-concentration curve (AUC), which is equivalent to minimizing the cumulative dosage.
	We first solve this problem under the assumption that an arbitrary antimicrobial concentration profile may be chosen, and prove that the {\it{ideal}} concentration profile
	consists of a constant concentration over a finite time duration, where explicit expressions for the optimal 
	concentration and the time duration are given in terms of the pharmacodynamic parameters. 
	Since antimicrobial concentration profiles are induced by a dosing schedule and the antimicrobial pharmacokinetics, the `ideal'
	concentration profile is 
	not strictly feasible. We therefore also 
	investigate the possibility of achieving 
	outcomes which are close to those 
	provided by the `ideal' concentration profile,
	using a bolus+continuous dosing schedule, which consists of 
	a loading dose followed by  infusion of the antimicrobial at a constant rate.
	We explicitly find the optimal bolus+continuous
	dosing schedule, and show that, for realistic parameter ranges, this schedule
	achieves results which are nearly as 
	efficient as those attained by the `ideal'
	concentration profile. The  optimality results obtained here provide a baseline and reference point for comparison and evaluation of antimicrobial treatment plans.
}

\keywords{Antimicrobial treatment, Mathematical models, Pharmacokinetics, Pharmacodynamics, Optimization}



\maketitle

\section{Introduction}

Antimicrobial agents have made an immense contribution to human welfare, and 
their effective and efficient use is an issue of crucial importance (\cite{owens,rotschafer}),  particularly in view of the global antimicrobial resistance crisis, which is driven, in part, by 
mis-use or over-use (\cite{murray,ventola}).
Mathematical modelling plays an important role in exploring the dynamics of microbial 
growth, antimicrobial pharmacokinetics (the absorption, distribution and elimination of the drug in the body) and
pharmacodynamics (the drug's effect on the microbial population) (\cite{nielsen,vinks}).
Coupled with
experimental laboratory work and clinical studies, mathematical modelling aids in the design and evaluation of treatment protocols 
and guidelines (\cite{bulitta,rao,rayner}).

A traditional and widely-employed 
approach to the quantitative design of antimicrobial 
treatment regimens employs several 
PK/PD indices which quantify exposure 
over a time period, and uses experimental studies to determine the index 
which is maximally correlated to measures of 
efficacy for a particular antimicrobial, with respect to a specific microbial species (\cite{onufrak,owens,vinks}). A different
methodology, known as `mechanism-based' or `semi-mechanistic' modelling (\cite{bouvier,czock,mi,mueller,nielsen,rao}) relies on modelling the full time-course of treatment using dynamic
models which describe the time dependence of both the microbial population and the antimicrobial agent's concentration, most often using differential equations. Such models include both pharmacokinetic 
parameters, related to drug distribution and elimination, and pharmacodynamic parameters, 
related to antimicrobial effect on the microbial population, and these parameters are estimated by fitting models to experimental data (\cite{bhagunde,czock,kesisoglou,mouton,nielsen,regoes,wen}). Once such a model is calibrated and validated,
it can serve as an {\it{in-silico}} experimental system, allowing to test the outcomes of a variety of treatment schedules. 

Alongside the development of relevant mathematical models, a variety analytical tools have been developed in order to gain understanding of the dynamic behavior of PK/PD models, and of how 
the parameters involved affect the outcomes of treatment (\cite{krzyzanski,macheras,mudassar,nguyen,peletier,rescigno,wu}).

The availability of mechanism-based  models raises the prospect of systematic 
determination of {\it{optimal}} treatment plans,
using mathematical and computational approaches, and indeed several researchers have undertaken such investigations. The dynamical models used, as well as the class of candidate treatment schedules considered and the quantities targeted for optimization, vary among different works. Computationally intensive methods are used for the purpose of finding the optimal schedules, including optimal control methods (\cite{ali,khan,pena,zilonova}), genetic algorithms (\cite{cicchese,colin,goranova,hoyle,paterson}) and machine learning (\cite{smith}). 
While such computational work is very valuable
and has the advantage of enabling the study 
of relatively elaborate models, it is also important to approach antimicrobial treatment 
optimization from an analytical point of view, with the aim of obtaining general insights and mathematical results.  

By employing simple, but widely used, mathematical models, and by formulating  natural optimization problems, we can mathematically prove several general
results characterizing optimal treatment plans. 
An analytic approach provides generic results which are valid for {\it{all}} parameter values of a model, rather than for specific sets of parameters as in numerical studies, and 
enables to obtain useful explicit formulas for determining 
the quantitites characterizing the optimal treatment regimens.
The results yield fundamental 
understanding of the problem of optimal treatment
with an antimicrobial agent. To the extent that the (standard) mathematical models used here capture the dynamics of microbial growth and the effect of antimicrobials, 
the results offer practical guidelines for the 
design of antimicrobial treatment schedules, as 
will be discussed.

We now provide an overview of the contributions presented 
in this work, referring to the corresponding 
sections for details.

We formulate and address some 
key issues regarding optimal 
antimicrobial treatment. Stated simply (and to be formulated more precisely below), our question is: 

\begin{itemize}
	\item[] Given that we aim to use an antimicrobial agent to eradicate a microbial infection, what is the treatment plan that will do so using a minimal cumulative dosage of antimicrobial? 
\end{itemize}

The essential tradeoff underlying this optimization 
problem is that, while a low concentration of 
antimicrobial will be insufficient to suppress 
microbial growth, a very high concentration will 
be wasteful due to the saturation of the 
antimicrobial effect at high concentrations, as well as increase the risk of toxicity. To quantitatively illuminate this tradeoff, we use a standard pharmacodynamic model describing the growth of a microbial population and a killing  rate of microbes depending on the antimicrobial concentration, see section \ref{themodel}. In the context of such a model, the microbial population size cannot reach $0$, so 
that `eradication' is defined in terms of reducing the microbial population size by a given factor, which will depend on the initial microbial  population size.

Our work consists of two parts, in which we address the above question on two levels. In our first investigation (sections \ref{optall}-\ref{hilf}) we allow an arbitrary time-dependence of the antimicrobial concentration at the infection site, and seek to find, among those concentration profiles which lead to eradication of the microbial population, the one for which the area under the time-concentration curve ($AUC$) 
is minimal. The $AUC$ is a standard measure for the overall exposure (\cite{nielsen}), and indeed it is proportional to the cumulative antimicrobial dosage (see equation \eqref{aucd}).
In this formulation of the problem, we are focusing on the pharmacodynamics, ignoring the fact that not every concentration profile is {\it{pharamacokinetically feasible}}, in the sense that it can be induced by an appropriate dosing schedule - these pharmacokinetic aspects are addressed in the second part of the paper. In this general context, we obtain several results:  
\begin{itemize}
	\item[(i)] The optimal antimicrobial concentration profile
	consists of a {\it{constant}} concentration $c_{opt}$ applied for a finite time duration $T_{opt}$.
	\item[(ii)] We find an algebraic equation which allows us to determine the values $c_{opt}$ and $T_{opt}$. In the case that the pharmacodynamics is described by a Hill function (the most commonly employed pharmacodynamic model) we solve this equation
	to obtain explicit expressions for $c_{opt}$ and $T_{opt}$ in terms of the pharmacodynamic parameters (see section \ref{hilf}).
	\item[(iii)] The optimal antimicrobial concentration $c_{opt}$ is independent of the initial size of the microbial population, which only affects the duration $T_{opt}$ of the optimal treatment.
\end{itemize}
The above results establish a {\it{baseline}} in the sense that they provide a lower bound for the $AUC$
needed to achieve eradication. However, 
this analysis focuses only on the pharamacodynamics,
that is the drug effect, ignoring the limitations 
on the concentration profile induced by pharmacokinetics - the dynamics of drug absorption and
elimination.
The  `ideal' concentration curve which achieves the lower bound, consisting of a constant concentration value over a
finite time-interval, is not strictly achievable 
in practice, due to the simple fact that drug 
concentration cannot drop to $0$ in a single instant, but rather decays in a gradual way.
Therefore, in section \ref{pharmacokinetic}, we address the question of achieving efficient treatment using 
drug concentration profiles which are {\it{`pharmacokinetically feasible'}}. 
We would like to achieve results which are close to
the `ideal' baseline determined in the first part 
of this work, but which can be realistically attained by a dosage plan, preferably one that is simple to implement.
In this work we restrict ourselves to a simple one-compartment pharmacokinetic model - leaving 
consideration of more complex pharmacokinetics to 
future work. The only pharmacokinetic parameter
is thus the rate of drug decay.
In this context, we examine simple dosage 
plans of the {\it{bolus+continuous}} type (\cite{derendorf}), in which a single (bolus) dose of the antimicrobial is given at the initiation of treatment, in order to instantaneously raise the drug concentration 
to a level $\bar{c}$, and constant-rate infusion is provided thereafter, for a time duration $T_{bc}$, in order to maintain the same concentration. The 
initial dose and the constant rate of infusion are 
determined by the desired concentration and the
pharmacokinetic parameter (rate of drug decay).
This choice of dosing schedule mimicks 
the `ideal' concentration profile in that the concentration is constant for a finite duration, but with exponential decay thereafter.
Optimizing over all 
such dosing schedules (that is over all choices of $\bar{c}$ and $T_{bc}$) which achieve eradication, with the aim of minimizing the $AUC$ (which is equivalent to minimizing cumulative dosage), we find the following:
\begin{itemize}
	\item[(i)] The optimal concentration $\bar{c}$ is (somewhat surprisingly) identical to the value $c_{opt}$ obtained 
	for the `ideal' concentration profile in the first part of our work. In particular, it does not depend on the pharamacokinetics, that is on the rate of decay of the antimicrobial.
	
	\item[(ii)] The optimal time duration $T_{bc,opt}$ over which 
	the constant-rate infusion of drug should be performed is 
	given by an explicit formula, and depends both on the pharmacodynamics and on the rate of decay of the antimicrobial. This duration is always shorter than the duration $T_{opt}$ of the `ideal' concentration profile. 
	
	\item[(iii)] If the antimicrobial decay rate is sufficiently small, then $T_{bc}=0$, that is the optimal bolus+continuous schedule consists only of a bolus dose, and if the 
	antimicrobial decay rate is large, then $T_{bc}$ is close to $T_{opt}$, and the 
	$AUC$ corresponding to the optimal bolus+continuous treatment is close to (though somewhat higher than) the $AUC$ of the `ideal' concentration profile.
\end{itemize}

Numerical results given in section \ref{bchill}, computed for the case of a Hill-type pharmacodynamic function, with realistic ranges of values of the pharmacokinetic and pharmacodynamic parameters, show that, in most cases, the optimal
bolus+continuous dosing schedule achieves results which are 
nearly as efficient as those attained using the `ideal' concentration profile, in that the $AUC$ valued attained is not significantly higher. We thus conclude that a bolus+continuous dosing schedule, suitably designed,
provides a nearly-optimal solution
under many circumstances. We also include a brief discussion and demonstration of the possibility of approximating a bolus+continuous dosing schedule by an intermittent schedule consisting of a series of bolus doses.

While the results obtained here provide what we 
believe to be an essential theoretical basis 
for thinking about the optimization of antimicrobial treatments, 
there are various complicating issues that should be taken into account in considering the application of these results in concrete settings. In section \ref{discussion} we address some of the limitations of the 
standard modelling framework employed in this work, and suggest directions for further investigation.

\section{The pharmacodynamic model}
\label{themodel}

In this section we describe the
modelling framwork which will be employed to 
study antimicrobial treatment schedules, which is standard 
in the field of pharmacodynamics (\cite{austin,bhagunde,bouvier,corvaisier,goranova,hoyle,kesisoglou, mouton,nielsen,nikolaou1,nikolaou2}).
The notation to be used is summarized in Table 1.

An antimicrobial treatment schedule will determine a function $C(t)$ ($t\geq 0$) describing the concentration 
of antimicrobial at the infection site as a function of time $t$, which we will call the {\it{concentration profile}}. We allow 
$C(t)$ to be an arbitrary non-negative function in the class  $L^1[0,\infty)$ of integrable functions.

The 
area under the concentration curve
\begin{equation}\label{auc}AUC=AUC[C(t)]=\int_0^\infty C(t)dt,\end{equation}
is a standard measure of the intensity of 
the antimicrobial treatment. Indeed it may be seen that the $AUC$ is proportional to 
the cumulative dosage of the antimicrobial
supplied, see equation \eqref{aucd}.

Denoting by $B(t)$ the size of the microbial population at time $t$, we use the standard constant-rate model 
of microbial growth in the absence of treatment 
$$\frac{dB}{dt}=rB,$$
where $r$ is the difference of the replication rate and the natural death rate, leading to exponential growth, with doubling time
\begin{equation}\label{dtime}T_2=\frac{\ln(2)}{r}.
\end{equation}
The antimicrobial effect is modelled using a function $k(c)$, known as the pharmacodynamic function (\cite{regoes}), or the kill curve (\cite{mueller}), which describes the kill-rate of the antimicrobial agent at concentration $c$. Note that we will be using lower-case $c$ to denote an arbitrary value of antimicrobial concentration, while the upper-case $C(t)$ is used for  describing the antimicrobial concentration as a function of time.

In the presence of antimicrobial, the microbial population is thus described by
\begin{equation}\label{model}\frac{dB}{dt}=[r-k(C(t))]B,\end{equation}
with solution
\begin{equation}\label{sol}B(t)=B_0\cdot e^{\int_0^t [r-k(C(s))]ds},\end{equation}
where $B_0$ is the initial microbial population size at time $t=0$.

The kill-rate function $k(c)$ will be assumed to have the following properties
\begin{itemize}
	\item[(A1)] $k(0)=0$, and $k(c)$ is continuous and monotone increasing on $[0,\infty)$, and twice differentiable for $c>0$.
	
	\item[(A2)] The kill rate 
	saturates at high 
	concentrations:
	\begin{equation}\label{maxkill}\lim_{c\rightarrow\infty} k(c)=k_{max}<\infty.\end{equation}
	
	\item[(A3)] The function 
	$k(c)$ satisfies one of the 
	following two conditions:
	
	(i) $k(c)$ is strictly {\bf{concave}}, that is $k''(c)<0$ for all $c\geq 0$.
	
	or
	
	(ii) $k(c)$ is
	{\bf{sigmoidal}}, that is, there exists a value $c_{infl}>0$ (the inflection point) so that 
	$$0<c<c_{infl} \;\;\Rightarrow\;\; k''(c)>0\;\;\;{\mbox{and}}\;\;\;c>c_{infl}\;\;\Rightarrow\;\; k''(c)<0.$$	
\end{itemize}

\begin{example}
	The most common functional form used for the pharmacodynamic function is the
	Hill function, also known as the 
	Sigmoid $E_{max}$ model (\cite{meibohm}) or the Zhi model (\cite{corvaisier,zhi}) 
	\begin{equation}\label{hill}k_H(c)=k_{max}\cdot\frac{c^{\gamma}}{C_{50}^{\gamma}+c^{\gamma}},\end{equation}
	where $\gamma>0$ is called the Hill exponent and
	$C_{50}$ is the half-saturation constant, the
	concentration at which the kill rate is half 
	of the maximal value $k_{max}$.
	In the survey of \cite{czock} one may find 
	tables with estimates of the parameters $C_{50},k_{max},\gamma$ for various combinations 
	of antimicrobials and microbial species, obtained through many empirical studies.
	
	The function $k_H(c)$ is concave 
	if $\gamma\leq 1$, and sigmoidal if $\gamma>1$, in which case the inflection point is given by
	$$c_{infl}=C_{50}\cdot \left(\frac{\gamma-1}{\gamma+1} \right)^{\frac{1}{\gamma}}.$$
	Our main results do not depend on this 
	specific functional form, but we will 
	apply the general results to this specific example, and obtain useful
	explicit expressions - see in particular sections \ref{hilf} and \ref{bchill}.
\end{example}

It will be useful to introduce the dimensionless parameter 
\begin{equation}\label{dalpha}
	\alpha=\frac{k_{max}}{r},
\end{equation}
measuring the maximal kill-rate of the 
antimicrobial relative to the natural 
microbial growth rate, which we will therefore
call the {\it{potency}} of the antimicrobial with 
respect to a microbial species.
We will make the standing assumption that
$\alpha>1$, 
that is $k_{max}>r$, which means that a sufficiently large concentration will lead to a negative net growth rate of the microbial population - if this is 
not the case then the antimicrobial is not effective. Under this assumption, and in view of the assumption (A1) above, there exists a unique value of $c$, denoted by $zMIC$, the pharmacodynamic minimal inhibitory concentration (\cite{bouvier,corvaisier}), also referred to as the stationary concentration (SC) (\cite{mouton,czock}), such that
\begin{equation}\label{dzMIC}k(zMIC)=r.\end{equation}

\begin{example}
	In the case that the kill-rate is given by 
	a Hill function \eqref{hill}, solving \eqref{dzMIC} for $zMIC$ and using \eqref{dalpha} gives
	\begin{equation}\label{zMIC}zMIC=C_{50}\cdot \left(\alpha-1\right)^{-\frac{1}{\gamma}}.\end{equation}
\end{example}

Note that within the framework of model \eqref{model} the microbial population cannot be reduced to $0$, since $B(t)$
given by \eqref{sol} is always positive, and indeed if $AUC$ is finite then
for large time the microbial population will recover, with $B(t)\rightarrow +\infty$ as $t\rightarrow \infty$. However,
in practice, reaching a sufficiently low value of $B(t)$ at some point in time, {\it{e.g.}} corresponding to less than one organism, implies eradication. 
The appropriate measure for the success of treatment is therefore the {\it{maximal}} reduction in the size of the microbial population achieved at some time. The reduction is standardly expressed on a logarithmic scale, by 
defining the log-reduction at time $T$
$$LR(T)=\log_{10}\left(\frac{B_0}{B(T)} \right)=\frac{1}{\ln(10)}\cdot \ln \left(\frac{B_0}{B(T)} \right),$$
(we use base $10$ in order to be consistent with the literature)
which in view of \eqref{sol} is given by 
\begin{equation}\label{LR}LR(T)=\frac{1}{\ln(10)}\cdot \int_0^T [k(C(t))-r]dt.\end{equation}
The {\it{maximal}} reduction afforded by
the concentration profile $C(t)$ is then
\begin{equation}\label{lrs}LR_{max}=LR_{max}[C(t)]=\max_{T\geq 0} LR(T).\end{equation}
We note that, since $LR(0)=0$ and $\lim_{T\rightarrow\infty}LR(T)=-\infty$, the maximum in \eqref{lrs} certainly exists.

Eradication of the infection thus 
corresponds to achieving $LR_{max}[C(t)]\geq LR_{target}$, where 
the value of $LR_{target}$ is given.

\begin{longtable}{p{0.2\textwidth}p{0.8\textwidth}}
	\label{notation}
		Symbol & Description\\
		\hline
		$B(t)$ & Microbial population size at time $t$. \\
		$B_0$ & Initial microbial population size.\\
		$C(t)$ & Antimicrobial concentration at time $t$. \\
		$AUC[C(t)]$ & Area under the time-concentration curve $C(t)$, see \eqref{auc}.\\
		$r$ & Microbial growth rate in absence of antimicrobial. \\
		$T_2$ & Microbial doubling time in absence of antimicrobial, see \eqref{dtime}.\\
		$k(c)$ & Kill-rate of antimicrobial, in dependence on its concentration.\\
		$k_{max}$ & Maximal kill-rate of antimicrobial, see  \eqref{maxkill}.\\
		$k_H(c)$ & Hill model for kill-rate, see  \eqref{hill}.\\
		$\gamma$ & Hill exponent, see \eqref{hill}.\\
		$C_{50}$ & Half saturation constant for Hill model, see \eqref{hill}.\\
		$\alpha$ & Antimicrobial potency relative to a microbial species, see \eqref{dalpha}.\\
		$c_{infl}$ & Inflection point of $k(c)$, in the sigmoidal case.\\
		$zMIC$ & Pharmacodynamic minimal inhibitory concentration, see  \eqref{dzMIC}.\\
		$LR(T)$ & log (base $10$) reduction of microbial population at time $T$, see \eqref{LR}.\\
		$LR_{max}[C(t)]$ & Maximal log reduction of microbial population corresponding to a given concentration profile, see \eqref{lrs}.\\
		$LR_{target}$ & Target log reduction of microbial 
		population for achieving eradication.\\
		$C_{opt}(t)$ & Optimal antimicrobial concentration profile, see Theorem \ref{mainr}.\\
		$c_{opt}$ & Optimal antimicrobial concentration, see Theorem \ref{mainr}.\\
		$T_{opt}$ & Duration of optimal concentration profile, see \eqref{topte}.\\
		$AUC_{opt}$ & Minimal $AUC$ attainable by any antimicrobial concentration profile achieving eradication, given by \eqref{aucopt}.\\
		$LR_{opt}$ & Maximal log reduction attainable by a concentration profile with given $AUC$, given by \eqref{lropt}.\\
		$T_{min}$ & Greatest lower bound for the time to achieve eradication, see \eqref{tmin}.\\
		$\mu$ & Antimicrobial decay rate.\\
		$\tau$ & Antimicrobial half-life, see \eqref{dtau}.\\
		$d(t)$ & Antimicrobial dosing rate at time $t$.\\
		$d_{bc}(t)$ & bolus+continuous dosing schedule, see \eqref{ds}.\\
		$d_{bc,opt}(t)$ & optimal bolus+continuous dosing schedule, see \eqref{ds1}.\\
		$T_{bc}$ & Duration of dosing for a bolus+continuous schedule.\\
		$T_{bc,opt}$ & Duration of the optimal dosing of bolus+continuous type, see \eqref{bart}.\\
		\hline\\
	\caption{Notation}\label{tab:parameters}
\end{longtable}

\section{Optimizing treatment: the `ideal' concentration profile}
\label{optall}

Our aim is to choose a concentration profile $C(t)$, among {\it{all}} non-negative integrable functions on $[0,\infty)$, so as to minimize the $AUC$, while achieving a specified  log-reduction of the microbial load. We therefore formulate:

\begin{prob}\label{prmain}
	Given a target value $LR_{target}$, find a concentration profile $C(t)$ achieving $LR_{max}[C(t)]=LR_{target}$, where $LR_{max}$ is given by \eqref{lrs}, with
	the corresponding 
	$AUC$ (given by \eqref{auc}) as {\it{small}} as possible.
\end{prob}

The following theorem provides a complete 
solution to this problem.

\begin{theorem}\label{mainr}
	Assume (A1)-(A3) and $\alpha>1$. Let a value $LR_{target}$ be given. The unique solution of Problem \ref{prmain} is given by the 
	concentration profile 
	\begin{equation}\label{copt1}C_{opt}(t)=\begin{cases}
			c_{opt} & 0\leq t\leq T_{opt}\\
			0 & t>T_{opt}
		\end{cases},\;\;\;\;\end{equation}
	where $c_{opt}$ is the unique maximizer of
	the function $f_r:(0,\infty)\rightarrow \Real$ defined by
	\begin{equation}\label{deff}f_r(c)=\frac{k(c)-r}{c},\end{equation}
	which is also the unique solution of the equation
	\begin{equation}\label{copte}k'(c)=\frac{k(c)-r}{c},\end{equation}
	and 
	\begin{equation}\label{topte}T_{opt}= \frac{\ln(10)\cdot LR_{target}}{k(c_{opt})-r}.\end{equation}
	The solution $C_{opt}(t)$ leads to  eradication at time $T_{opt}$ (that is, $LR(T_{opt})=LR_{target}$), and 
	achieves the minimal value possible of $AUC$, given by
	\begin{equation}\label{aucopt}AUC_{opt}=T_{opt}\cdot c_{opt}=\ln(10)\cdot LR_{target}\cdot \frac{c_{opt}}{k(c_{opt})-r}.\end{equation}
\end{theorem}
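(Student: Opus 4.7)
The plan is to reduce Problem \ref{prmain} to a pointwise bound via the auxiliary function $f_r(c) = (k(c)-r)/c$ together with the algebraic identity $k(c) - r = c \cdot f_r(c)$ valid for $c > 0$. Given any admissible profile $C(t)$, let $T^* \geq 0$ be a time at which $LR(T)$ attains its maximum (existence noted after \eqref{lrs}), set $E = \{t \in [0, T^*] : C(t) > 0\}$, and chain estimates
\[
\ln(10) \cdot LR_{target} \leq \int_0^{T^*}[k(C(t)) - r]\, dt \leq \int_E f_r(C(t))\, C(t)\, dt \leq f_r(c_{opt}) \cdot AUC[C],
\]
where the first step uses $LR(T^*) \geq LR_{target}$, the second drops the nonpositive term $-r \cdot |[0,T^*] \setminus E|$ obtained from the $\{C=0\}$ subset, and the third uses $f_r(C(t)) \leq f_r(c_{opt})$ on $E$ together with $\int_E C \leq \int_0^\infty C$. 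Rearranging gives $AUC[C] \geq \ln(10)\, LR_{target}/f_r(c_{opt}) = AUC_{opt}$.

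The second task is to establish existence and uniqueness of $c_{opt}$ as the maximizer of $f_r$, and to show it is characterized by equation \eqref{copte}. I would study $g(c) := c k'(c) - [k(c) - r]$, whose zeros are exactly the critical points of $f_r$. Since $g'(c) = c k''(c)$, in case (A3)(i) the function $g$ is strictly decreasing on $(0,\infty)$, while in case (A3)(ii) it increases on $(0, c_{infl})$ and decreases on $(c_{infl}, \infty)$. Combined with the boundary values $g(0^+) = r > 0$ and $\lim_{c \to \infty} g(c) = r - k_{max} < 0$, this yields a unique zero $c_{opt}$ in both cases. Since $f_r(0^+) = -\infty$, $f_r(c) \to 0^+$ at infinity, and $f_r(c_{opt}) > 0$ (because $c_{opt} > zMIC$), this unique critical point is the global maximizer, and \eqref{copte} is just $g(c) = 0$ rewritten.

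To finish, the candidate $C_{opt}$ saturates every inequality above: $AUC[C_{opt}] = c_{opt}\, T_{opt}$, and $LR(T_{opt}) = T_{opt}(k(c_{opt}) - r)/\ln(10) = LR_{target}$ by definition of $T_{opt}$, with $LR$ strictly increasing on $[0, T_{opt}]$ (since $k(c_{opt}) > r$) and strictly decreasing afterwards (since $C_{opt} \equiv 0$ there), so the maximum is attained at $T_{opt}$. Uniqueness of the minimizer follows by tracing equality conditions in each step above: equality in the second step forces $|[0, T^*] \setminus E| = 0$, equality in the third forces $C \equiv c_{opt}$ on $E$ (by uniqueness of the maximizer of $f_r$) together with $C \equiv 0$ on $(T^*, \infty)$; then $AUC_{opt} = c_{opt}\, T^*$ pins down $T^* = T_{opt}$, recovering $C = C_{opt}$ almost everywhere.

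The main technical obstacle I anticipate is verifying $\lim_{c \to \infty} c k'(c) = 0$, needed to pin down $\lim_{c\to\infty} g(c) = r - k_{max}$. Assumption (A2) alone gives only the horizontal asymptote $k \to k_{max}$; one must additionally invoke concavity (global in case (i), eventual in case (ii)), for instance through the inequality $k(2c) - k(c) \geq c k'(2c)$, to squeeze $c k'(c)$ between $0$ and a quantity tending to zero.
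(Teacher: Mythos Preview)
Your proposal is correct and follows essentially the same route as the paper: the key pointwise inequality $k(c)-r \le c\cdot f_r(c_{opt})$ is used to bound the log-reduction integral by $f_r(c_{opt})\cdot AUC$, uniqueness of $c_{opt}$ is obtained via the sign of $g(c)=ck'(c)-k(c)+r$ (the paper works with $h(c)=k(c)-ck'(c)=r-g(c)$), and uniqueness of the minimizer is traced through the equality cases exactly as you describe. The one minor difference is that the paper first establishes existence of a global maximizer of $f_r$ from its sign pattern and its limit at infinity, and only then invokes monotonicity of $h$ to conclude uniqueness of the critical point---thereby sidestepping the computation of $\lim_{c\to\infty} ck'(c)$ that your direct IVT argument requires (and which you correctly flag and handle).
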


Several notable consequences emerge from Theorem \ref{mainr}:

\begin{itemize}
	\item[(i)] The `ideal' dosing strategy is to keep 
	the concentration of antimicrobial {\it{constant}} at $c_{opt}$ for the duration of time determined by
	$T_{opt}$. 
	
	\item[(ii)] By \eqref{copte}, $c_{opt}$ depends only on the microbial growth rate $r$ and on the 
	kill-rate function $k(c)$, and not on the target reduction $LR_{target}$. Thus the given target reduction affects the optimal schedule only through its effect on the duration $T_{opt}$, which, by \eqref{topte}, depends linearly on $LR_{target}$.
	
	\item[(iii)] By \eqref{aucopt}, the minimal achievable $AUC$ depends linearly on the target reduction
	$LR_{target}$.
\end{itemize}

We note that equation \eqref{copte} leads to a graphical construction for 
obtaining the optimal concentration $c_{opt}$ - see Figure \ref{graphic}.

\begin{figure}
	\begin{center}
		\includegraphics[width=0.8\linewidth]{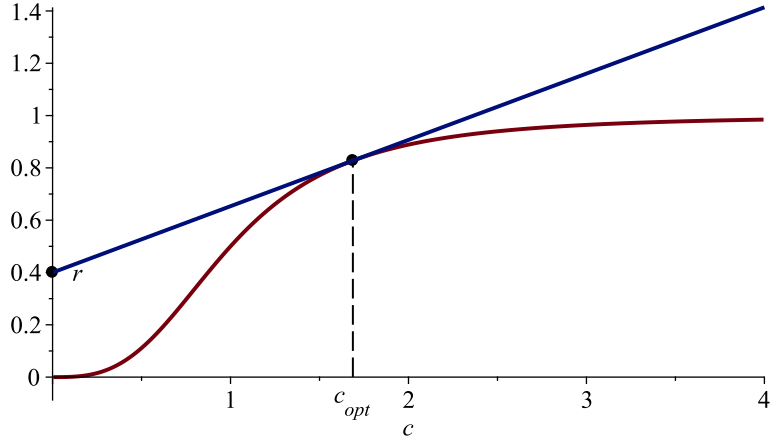}
	\end{center}
	\caption{Graphical construction for determining the optimal concentration $c_{opt}$. Plot the straight line which passes through the point $(0,r)$ and is tangent to the curve $k(c)$. The abscissa at the tangency point is the value $c_{opt}$. In this plot we use the Hill curve \eqref{hill} with $\gamma=3,C_{50}=1,k_{max}=1$, and $r=0.4$.}
	\label{graphic}
\end{figure}
In section \ref{hilf} the results of Theorem \ref{mainr} will be 
applied to the standard
Hill-type form of the killing curve, leading to explicit expressions for 
$c_{opt}$, $T_{opt}$.

We have formulated our problem and solution in 
terms of minimizing the $AUC$ subject to achieving a given log-reduction $LR_{target}$, but an equivalent problem is to maximize the log-reduction $LR_{max}$ 
subject to a given $AUC$ - since $AUC$ is proportional to the cumulative dosage this problem will arise if we want to maximize the efficacy of a given total dose of the antimicrobial. We can 
re-formulate the result of Theorem \ref{mainr} as follows:
\begin{corol}
	Given a value $AUC$, the concentration profile $C(t)$ satisfying
	$AUC[C(t)]=AUC$ and 
	inducing the maximal log-reduction $LR_{max}[C(t)]$ is given by \eqref{copt1}, where $c_{opt}$ is the 
	unique solution of \eqref{copte}, and 
	$$T_{opt}=\frac{AUC}{c_{opt}}.$$
	The value of $LR_{max}$ attained by this optimal profile is
	\begin{equation}\label{lropt}LR_{opt}=\frac{1}{\ln(10)}\cdot \frac{k(c_{opt})-r}{c_{opt}}\cdot AUC.
	\end{equation}
\end{corol}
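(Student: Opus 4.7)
The plan is to deduce the corollary directly from Theorem \ref{mainr} by a short duality-type argument: the optimum of the dual problem (maximize $LR_{max}$ for fixed $AUC$) must coincide with the optimum of the primal problem (minimize $AUC$ for fixed $LR_{target}$), and the optimizer has the same shape.

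First, I will verify feasibility and attained value of the proposed profile. For $C_{opt}(t)$ given by \eqref{copt1} with the stated $c_{opt}$ and $T_{opt}=AUC/c_{opt}$, a direct computation gives $AUC[C_{opt}]=c_{opt}\cdot T_{opt}=AUC$, so the profile is admissible. Using \eqref{LR} and the fact that $k(c_{opt})-r>0$ (which follows from $f_r(c_{opt})>0$, itself a consequence of assumption $\alpha>1$ and the choice of $c_{opt}$ as the maximizer of $f_r$), the function $LR(T)$ is strictly increasing on $[0,T_{opt}]$ and strictly decreasing afterwards, so
\begin{equation*}
LR_{max}[C_{opt}]=LR(T_{opt})=\frac{1}{\ln(10)}\cdot (k(c_{opt})-r)\cdot T_{opt}=\frac{1}{\ln(10)}\cdot\frac{k(c_{opt})-r}{c_{opt}}\cdot AUC,
\end{equation*}
which is exactly $LR_{opt}$ of \eqref{lropt}.

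Next, I will prove optimality by contradiction. Suppose some admissible profile $\tilde C(t)$ with $AUC[\tilde C]=AUC$ achieves $LR_{max}[\tilde C]>LR_{opt}$. Apply Theorem \ref{mainr} with target reduction $LR_{target}:=LR_{max}[\tilde C]$: the minimal $AUC$ over all profiles attaining at least this log-reduction is $AUC_{opt}=\ln(10)\cdot LR_{target}\cdot c_{opt}/(k(c_{opt})-r)$. Since $\tilde C$ attains the log-reduction $LR_{target}$, we must have $AUC\geq AUC_{opt}$, which rearranges to $LR_{target}\leq LR_{opt}$, contradicting the assumption. Hence $LR_{max}[\tilde C]\leq LR_{opt}$ for every admissible $\tilde C$, so $C_{opt}$ is optimal.

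Uniqueness requires a small extra remark: if $\tilde C$ attains $LR_{max}[\tilde C]=LR_{opt}$ with $AUC[\tilde C]=AUC$, then $\tilde C$ is a minimizer of $AUC$ among profiles with log-reduction target $LR_{opt}$, and hence by the uniqueness part of Theorem \ref{mainr} it must coincide with the rectangular profile \eqref{copt1}. The only step that requires any care is the conversion between the two optimization problems, and in particular making sure that the inequality $AUC\geq AUC_{opt}$ in Theorem \ref{mainr}, which is phrased there for profiles attaining $LR_{target}$ exactly, also applies to profiles with $LR_{max}\geq LR_{target}$; this follows since in the proof of Theorem \ref{mainr} the bound is ultimately on $LR_{max}$, not on any fixed finite horizon, so no obstacle arises.
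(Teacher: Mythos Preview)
Your proof is correct and is essentially what the paper intends: the paper does not give a separate proof of the corollary but simply presents it as a ``re-formulation'' of Theorem~\ref{mainr}, and your duality argument is precisely the natural way to make this rigorous. The only minor remark is that your final caveat about extending the inequality $AUC\geq AUC_{opt}$ to profiles with $LR_{max}\geq LR_{target}$ is unnecessary, since in your contradiction step you set $LR_{target}=LR_{max}[\tilde C]$ exactly, so Theorem~\ref{mainr} applies as stated.
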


We now begin the analysis leading to the proof of Theorem \ref{mainr}.
As a first step, we consider only the specific class of concentration profiles which 
take a constant value $c$ for a 
duration $T_{f}$ (the final time), that is:
\begin{equation}C(t)=\label{cp}\begin{cases}
		c & 0\leq t\leq T_f\\
		0 & t>T_f
\end{cases}.\end{equation}
Among these profiles, we will now find the one which achieves
the log-reduction $LR_{target}$ with minimal $AUC$. We emphasize that the restriction to profiles of the form \eqref{cp} is only temporary -  we will later prove that the resulting concentration
profile is in fact optimal among
{\it{all}} concentration profiles
satisfying $LR_{max}[C(t)]=LR_{target}$.

The log-reduction up to an arbitrary time $T\geq 0$,
corresponding to the concentration profile \eqref{cp}, is (see \eqref{LR}) $$LR(T)=\frac{1}{\ln(10)}\cdot \int_0^T [k(C(t))-r]dt=\frac{1}{\ln(10)}\cdot\begin{cases}
	T\cdot (k(c)-r) & T\leq T_{f}\\
	T_{f}\cdot k(c)-T\cdot r & T>T_{f}
\end{cases},$$
which is increasing for $T\leq T_{f}$ and decreasing for $T\geq T_{f}$,
hence
$$LR_{max}=\max_{T\geq 0}LR(T)=LR(T_{f})=\frac{1}{\ln(10)}\cdot T_{f}\cdot (k(c)-r).$$
Therefore to achieve a target 
log-reduction $LR_{target}$ using the constant 
concentration $c$ we need to choose
\begin{equation}\label{tf}
	T_f=\frac{\ln(10)\cdot LR_{target}}{k(c)-r}.\end{equation}
The $AUC$ corresponding to this profile is 
$$AUC=T_f\cdot c=\ln(10)\cdot LR_{target}\cdot \frac{c}{k(c)-r}.$$
Thus to minimize the $AUC$ we need to 
maximize the function $f_r:(0,\infty)\rightarrow \Real$ defined by \eqref{deff}
over all $c>0$. The existence of this maximizer, and 
the fact that it is the unique critical point of $f_r$, is shown in the following Lemma:
\begin{lemma}\label{fmax}
	Assume $k(c)$ satisfies (A1)-(A3), and $\alpha>1$.
	Then the function $f_r$ defined by \eqref{deff}
	has a unique critical point $c_{opt}$,
	that is a value satisfying
	\begin{equation}\label{copt0}f_r'(c)=0\;\;\Leftrightarrow\;\; k'(c)=\frac{k(c)-r}{c},\end{equation}
	which is a global maximizer of $f_r$.
	
	In the sigmoidal case (see  (A3)(ii)), we always have $c_{opt}>c_{infl}$, where $c_{infl}$ is the inflection point of $k(c)$.
\end{lemma}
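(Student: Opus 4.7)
The plan is to study the auxiliary function
$$h(c) \;=\; c\,k'(c) \,-\, k(c) \,+\, r, \qquad c>0,$$
which, by a short computation, satisfies $f_r'(c) = h(c)/c^2$. Thus critical points of $f_r$ correspond exactly to zeros of $h$, and these are precisely the solutions of \eqref{copt0}. Differentiating once more yields the crucial identity
$$h'(c) \;=\; c\,k''(c),$$
so that the sign of $h'$ is governed entirely by the convexity/concavity structure of $k$ specified in (A3).

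First I would pin down the boundary behavior of $h$. At $c = 0^+$ one has $k(c) \to 0$ by continuity, while $c\,k'(c) \to 0$ in both regimes of (A3): in the concave case from $c\,k'(c) \leq \int_0^c k'(s)\,ds = k(c) \to 0$ (since $k'$ is decreasing), and in the sigmoidal case from $c\,k'(c) \leq c\,k'(c_{infl})$ on $(0,c_{infl})$ (since $k'$ is increasing there). Hence $h(0^+) = r > 0$. At $c = \infty$, the concavity estimate
$$k(2c) - k(c) \;=\; \int_c^{2c} k'(s)\,ds \;\geq\; c\,k'(2c),$$
valid on any half-line on which $k$ is concave (all of $(0,\infty)$ in case (i), and $(c_{infl},\infty)$ in case (ii)), combined with $k(c) \to k_{max} < \infty$, forces $c\,k'(c) \to 0$. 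Therefore $h(c) \to r - k_{max} < 0$, using the standing assumption $\alpha > 1$.

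In the concave case $h'(c) < 0$ on $(0,\infty)$, so $h$ is strictly decreasing from $r$ to $r - k_{max}$, and has exactly one zero $c_{opt}$ by the intermediate value theorem. In the sigmoidal case $h'$ is positive on $(0, c_{infl})$ and negative on $(c_{infl}, \infty)$, so $h$ increases on $(0, c_{infl})$ and then decreases on $(c_{infl}, \infty)$; since $h(0^+) = r > 0$ it remains positive on the whole increasing branch, so $h > 0$ on $(0, c_{infl}]$. The unique zero therefore must lie on the decreasing branch, which gives $c_{opt} > c_{infl}$ automatically.

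Finally, to upgrade this unique critical point to a global maximizer, I would note that $f_r(c) \to -\infty$ as $c \to 0^+$ (because $k(0) = 0 < r$) and $f_r(c) \to 0^+$ as $c \to \infty$ (because $k(c) \to k_{max} > r$). Since $h$ changes sign from positive to negative across $c_{opt}$, so does $f_r'$, making $c_{opt}$ a local maximum; being the only critical point of a smooth function on $(0,\infty)$ with the stated boundary behavior, it is the global maximizer. I expect the main technical nuisance to be the rigorous verification that $c\,k'(c) \to 0$ at both endpoints without any extra hypotheses on $k$ beyond (A1)--(A3); everything else is a straightforward intermediate value and sign analysis for $h$.
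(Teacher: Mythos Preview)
Your proof is correct and follows essentially the same route as the paper: both hinge on the auxiliary function $h$ (the paper writes $h(c)=k(c)-ck'(c)$ and looks for $h=r$, which is your $h$ up to sign and a constant) and the key identity $h'(c)=\pm\,c\,k''(c)$, treating the concave and sigmoidal cases separately. The one organizational difference is that the paper avoids the endpoint analysis of $c\,k'(c)$ you flag as the main nuisance: it first secures \emph{existence} of a global maximizer directly from the sign of $f_r$ (negative below $zMIC$, positive above, tending to $0$ at infinity), and only then proves uniqueness of the critical point --- in the sigmoidal case ruling out critical points on $(0,c_{infl}]$ via the convexity inequality $k'(c)>k(c)/c$ rather than by computing $h(0^+)$.
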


\begin{proof}[Proof of Lemma \ref{fmax}]
	Using (A1),(A2) and the assumption $
	\alpha>1$, we have
	$$c<zMIC\;\;\Rightarrow\;\; f_r(c)<0,\;\;\;c>zMIC\;\;\Rightarrow\;\; f_r(c)>0,$$
	$$\lim_{c\rightarrow+\infty}f_r(c)=0.$$
	These facts imply that 
	$f_r(c)$ has a global maximizer 
	on $(0,\infty)$, which we denote by
	$c_{opt}$. 
	
	It remains to show that
	$c_{opt}$ is the unique critical point of $f_r(c)$.
	We have
	\begin{equation}\label{df}f_r'(c)=\frac{k'(c)-\frac{k(c)}{c}}{c}+\frac{r}{c^2}
		=\frac{r-h(c)}{c^2}
	\end{equation}
	where 
	\begin{equation}\label{defh}h(c)=k(c)-ck'(c),\end{equation}
	so any critical point of $f_r(c)$ 
	satisfies $h(c)=r$.
	Note that 
	\begin{equation}\label{hd}h'(c)=-ck''(c),\end{equation}
	hence:
	
	(a) if $k$ is concave then 
	$h'(c)>0$, so $h(c)$ is increasing in $(0,\infty)$, hence the critical point $c_{opt}$ of $f_r(c)$ is unique. 
	
	(b) If 
	$k$ is sigmoidal, then $h'(c)>0$ for $c>c_{infl}$, hence $h(c)$ is 
	increasing in this range, so that
	$f_r(c)$ has at most one critical point 
	in the interval $[c_{infl},\infty)$.
	To prove uniqueness it therefore suffices to show that $f_r(c)$ has no
	critical point in $[0,c_{infl}]$.
	But note that since
	$k(c)$ is convex on 
	$[0,c_{infl}]$, hence $k'(c)$ is increasing on this interval, we have
	\begin{equation}\label{ii}c\in (0,c_{infl}]\;\;\Rightarrow\;\;k'(c)=\frac{1}{c}\int_0^c k'(c)du>\frac{1}{c}\int_0^c k'(u)du=\frac{k(c)}{c},\end{equation}
	hence, by \eqref{df}, $f_r'(c)>0$ for
	$c\in (0,c_{infl}]$. Note that this also shows that $c_{opt}>c_{infl}$.
\end{proof}

The above considerations show that the optimal
concentration profile, among those of the form 
\eqref{cp} which achieve $LR_{target}$, is given by \eqref{copt1}, with 
$c_{opt}$ given by \eqref{copte}, and $T_{opt}$ given by \eqref{topte}.

We now show that the profile
$C_{opt}(t)$ is in fact optimal among
{\it{all}} concentration profiles which achieve
$LR_{target}$, and is thus the solution to 
Problem \ref{prmain}:

\begin{proof}[Proof of Theorem \ref{mainr}]
	We have, for any 
	$c>0$,
	$$k(c)-r=c\cdot \frac{k(c)-r}{c}\leq c \cdot \max_{c'>0}\frac{k(c')-r}{c'},$$
	and in view of Lemma \ref{fmax} this implies
	\begin{equation}\label{bin}
		k(c)-r\leq c\cdot \frac{k(c_{opt})-r}{c_{opt}},
	\end{equation}
	which is also valid for $c=0$.
	
	Let $C(t)$ be {\it{any}} concentration profile with
	$LR_{max}[C(t)]=LR_{target}$, and let $T^*>0$
	be a value for which $LR(T^*)=LR_{max}[C(t)]$. We then have, using
	(\ref{bin})
	\begin{eqnarray}\label{ki1}\ln(10)\cdot LR_{target}&=&\ln(10)\cdot LR(T^*)= \int_0^{T^*} [k(C(t))-r]dt\\&\leq& \frac{k(c_{opt})-r}{c_{opt}}\cdot \int_0^{T^*} C(t)dt\nonumber\\&\leq&  \frac{k(c_{opt})-r}{c_{opt}}\cdot \int_0^\infty C(t)dt= \frac{k(c_{opt})-r}{c_{opt}}\cdot AUC\nonumber\end{eqnarray}
	so that
	\begin{equation}\label{ki2}AUC\geq\ln(10)\cdot LR_{target}\cdot\frac{c_{opt}}{k(c_{opt})-r}= AUC_{opt},\end{equation}
	where $AUC_{opt}$ is given by \eqref{aucopt}, 
	so we see that $AUC$ cannot be made 
	smaller $AUC_{opt}$. Therefore $C_{opt}(t)$ is a minimizer. 
	
	To show uniqueness of the 
	minimizer, note that if we have
	equality in \eqref{ki2}, hence in 
	\eqref{ki1}, then it must be the case that, for almost every $t\in [0,T^*]$,
	$$k(C(t))-r=\frac{k(c_{opt})-r}{c_{opt}}\cdot C(t)\;\;\Rightarrow\;\;\frac{k(C(t))-r}{C(t)}=\max_{c> 0}\frac{k(c)-r}{c},$$
	implying that $C(t)=c_{opt}$, for almost every $t\in [0,T^*]$, as well as that
	$$\int_0^{T^*} C(t)dt=\int_0^\infty C(t)dt,$$
	implying that $C(t)=0$ for {\it{a.e.}}
	$t\geq T^*$. We therefore have
	$$LR_{max}[C(t)]=\frac{1}{\ln(10)}\cdot T^*\cdot  (k(c_{opt})-r),$$ implying that
	$$T^*=\frac{\ln(10)\cdot LR_{target}}{k(c_{opt})-r}=T_{opt}.$$
	We have thus shown that $C(t)=C_{opt}(t)$ for {\it{a.e.}} $t$, establishing uniqueness.
\end{proof}
The assumption (A3) that $k(c)$ is either concave or sigmoidal, plays only a limited role in the proof of Theorem \ref{mainr} - it was used to prove that the maximizer $c_{opt}$ of $f_r(c)$ is unique, and that $c_{opt}$ is the unique solution of \eqref{copte} (Lemma \ref{fmax}). The {\it{existence}} of a global maximizer 
of $f_r(c)$ follows from (A1),(A2) (see proof of Lemma 1), and any such global maximizer $c_{opt}$ gives rise to a solution $C_{opt}(t)$ of Problem \ref{prmain}, defined by \eqref{copt1}.
If (A3) does not hold (note that (A1),(A2) imply that $k(c)$ cannot be convex, but it can have more than one inflection point), then it is possible that $f_r(c)$ will have critical points which are not global maximizers, which will thus not give rise to solutions of Problem \ref{prmain}, or that the global maximum will be attained at more than one point, in which case there will be two or more solutions of Problem \ref{prmain}, all giving rise to the same $AUC$. However, kill-rate functions employed in practice are either concave or sigmoidal, hence (A3) is satisfied.

We conclude this section by considering a variant of 
Problem \ref{mainr}, which will be relevant under some circumstances.
A notable feature of the the optimization problem that we have posed and solved above is the fact that neither the duration of treatment, nor the time $T^*$ at which eradication is achieved, are constrained in advance - rather they are determined as part of the solution of the problem (and turn out to be equal). This allows for maximal flexibility in achieving the aim of minimizing the $AUC$. A potential disadvantage arises in case that the duration $T_{opt}$ until eradication is judged to be too long, in which case we might be willing to accept a higher value of $AUC$ in order achieve a lower value of the time $T^*$ to eradication. We can thus formulate a {\it{time-restricted}} version of Problem \ref{prmain}, in which we pre-determine a maximal allowed duration $\hat{T}$ to eradication:
\begin{prob}[Time-restricted version of Problem \ref{prmain}]\label{prmodified}
	Given a target value $LR_{target}$ and a maximal duration $\hat{T}$, find a concentration profile $C(t)$ such that there exists a time $T\leq \hat{T}$
	for which $LR(T)=LR_{target}$ (where $LR(T)$ is given by \eqref{LR}), with
the corresponding 
$AUC$ (given by \eqref{auc}) as {\it{small}} as possible.
\end{prob}

The solution of this modified problem, given in the following theorem, is different for $\hat{T}$ in different ranges.

\begin{theorem}\label{modified}
Assume (A1)-(A3) and $\alpha>1$. Let a value $LR_{target}$ and a maximal duration
$\hat{T}$ be given. Then

\begin{itemize}
\item[(i)] If $\hat{T}\geq T_{opt}$, the solution of Problem \ref{prmodified} coincides with the solution $C_{opt}(t)$ of problem \ref{prmain}.

\item[(ii)] If 
\begin{equation}\label{tmin}
\hat{T}\leq T_{min}\doteq\frac{\ln(10)\cdot LR_{target}}{k_{max}-r}
\end{equation}
then it is {\bf{impossible}} to achieve eradication within such a short time, so that the problem has no solution.

\item[(iii)] In the intermediate range,
$T_{min}<\hat{T}<T_{opt}$, the solution of Problem \ref{prmodified}
consists of a constant concentration
$\hat{c}$, applied for the entire time
duration up to time $\hat{T}$:
\begin{equation}\label{chat}\hat{C}(t)=\begin{cases}
	\hat{c} & 0\leq t\leq \hat{T}\\
	0 & t>\hat{T},
\end{cases},\end{equation}
where $\hat{c}$ is determined by the 
equation
\begin{equation}\label{erad}
\hat{T}\cdot [k(\hat{c})-r]=\ln(10)\cdot LR_{target}.
\end{equation}
\end{itemize}
\end{theorem}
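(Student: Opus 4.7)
The plan is to handle the three cases separately. For Part (i), I would observe that when $\hat{T}\geq T_{opt}$ the ideal profile $C_{opt}$ of Theorem \ref{mainr} is feasible for Problem \ref{prmodified}, and since it already minimizes $AUC$ among all profiles achieving $LR_{target}$, it remains optimal under the added time constraint. For Part (ii), I would exploit the fact that any $C\in L^1[0,\infty)$ satisfies $C(t)<\infty$ almost everywhere, hence $k(C(t))<k_{max}$ a.e., yielding the strict bound $\int_0^T[k(C(s))-r]ds<T(k_{max}-r)$ for every $T>0$; for $T\leq \hat{T}\leq T_{min}$ this gives $LR(T)<LR_{target}$, so no feasible profile exists.

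For Part (iii), I would first verify that $\hat{c}$ is well-defined: monotonicity and continuity of $k$, together with $\hat{T}>T_{min}$ (which ensures $r+\ln(10)\cdot LR_{target}/\hat{T}<k_{max}$), produce a unique $\hat{c}$ satisfying \eqref{erad}. Comparing \eqref{erad} with \eqref{topte} and using $\hat{T}<T_{opt}$ yields $k(\hat{c})>k(c_{opt})$, and hence $\hat{c}>c_{opt}$. Feasibility of $\hat{C}$ is then immediate from \eqref{erad}.

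The crux of the argument is the tangent-line inequality
\begin{equation}\label{tangent}
k(c)-r\leq [k(\hat{c})-r]+k'(\hat{c})(c-\hat{c})\quad\text{for all }c\geq 0,
\end{equation}
which I expect to be the main technical obstacle, particularly in the sigmoidal case where $k$ is not globally concave. In the concave case, \eqref{tangent} is the standard tangent bound for a concave function. In the sigmoidal case one has $\hat{c}>c_{opt}>c_{infl}$, so concavity of $k$ on $[c_{infl},\infty)$ gives \eqref{tangent} throughout that interval. On $[0,c_{infl}]$, the gap between the right-hand side of \eqref{tangent} and $k(c)-r$ is itself a concave function of $c$ (since $k$ is convex there), is nonnegative at $c=c_{infl}$ by the preceding step, and is nonnegative at $c=0$ because its value there equals $h(\hat{c})$, which is strictly positive by monotonicity of $h$ on $[c_{infl},\infty)$ (see \eqref{hd}) combined with $h(c_{opt})=r$ and $\hat{c}>c_{opt}$.

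Granted \eqref{tangent}, for any feasible $C$ with $LR(T^*)=LR_{target}$ at some $T^*\leq\hat{T}$, integrating \eqref{tangent} over $[0,T^*]$ and substituting $\ln(10)\cdot LR_{target}=\hat{T}[k(\hat{c})-r]$ produces, after rearrangement, the lower bound
\[AUC\geq \int_0^{T^*}C(t)dt\geq T^*\hat{c}+(\hat{T}-T^*)\cdot\frac{k(\hat{c})-r}{k'(\hat{c})}.\]
Since $h(\hat{c})>r$ is equivalent to $(k(\hat{c})-r)/k'(\hat{c})>\hat{c}$, the right-hand side is $\geq \hat{T}\hat{c}$, with equality forcing $T^*=\hat{T}$, together with $C(t)=\hat{c}$ a.e.\ on $[0,\hat{T}]$ (from the strict equality case of \eqref{tangent}) and $C(t)=0$ a.e.\ on $(\hat{T},\infty)$, thereby establishing both optimality and uniqueness of $\hat{C}$.
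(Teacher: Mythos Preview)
Your proposal is correct and follows essentially the same path as the paper's proof: parts (i) and (ii) are identical, and in part (iii) both arguments rest on the same affine bound for $k(c)-r$ at the point $\hat{c}$ (your tangent inequality \eqref{tangent}), followed by integration and the use of $T^*\leq\hat{T}$.

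The one noteworthy difference is how the tangent bound is obtained. The paper introduces the shifted parameter $\hat{r}\doteq h(\hat{c})=k(\hat{c})-\hat{c}k'(\hat{c})$, observes that $\hat{c}$ then satisfies the critical-point equation $k'(\hat{c})=(k(\hat{c})-\hat{r})/\hat{c}$, and re-applies Lemma~\ref{fmax} with $\hat{r}$ in place of $r$ to conclude that $\hat{c}$ is the global maximizer of $f_{\hat{r}}$; the resulting inequality $k(c)-\hat{r}\leq c\,(k(\hat{c})-\hat{r})/\hat{c}$ is algebraically identical to your \eqref{tangent}. You instead prove \eqref{tangent} directly from concavity, handling the sigmoidal case by a separate concavity argument on $[0,c_{infl}]$. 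The paper's route is slicker in that it recycles existing machinery and immediately yields the strict-equality characterization needed for uniqueness (via uniqueness of the maximizer in Lemma~\ref{fmax}); your route is more geometric and self-contained, though you should make explicit that equality in \eqref{tangent} forces $c=\hat{c}$ --- this follows from strict concavity of $k$ on $(c_{infl},\infty)$ together with your endpoint argument showing the gap function is strictly positive on $[0,c_{infl}]$.
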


Thus, while the lowest $AUC$ will be achieved by the 
profile $C_{opt}(t)$ given by \eqref{copt1} in time $T_{opt}$, if one wishes to achieve eradication 
in a shorter time $\hat{T}<T_{opt}$, one can do so, at the price of a higher $AUC$, as long as $\hat{T}>T_{min}$. In this case the lowest $AUC$ that can be achieved - which will be higher than $AUC_{opt}$ - will be attained by chosing the profile 
$\hat{C}(t)$ according to \eqref{chat},\eqref{erad}.

We note that the numerical exmaples in the following section show that, for realistic parameter values, the optimal duration $T_{opt}$ is quite reasonable - on the order of a few days when the 
doubling time of the microbial population in the absence of antimicrobial is on the order of hours. Hence in most practical cases there might not be need to restrict the time duration to be below 
$T_{opt}$, though this might change when considering critical cases in which fast reduction of the microbial load is required.

\begin{proof}[Proof of Theorem \ref{modified}]
(i) is trivially true, since if $\hat{T}>T_{opt}$ then $C_{opt}(t)$ leads to  eradication before time 
$\hat{T}$, and, by Theorem \ref{mainr} it achieves the lowest possible value of $AUC$.

(ii) If $\hat{T}\leq T_{min}$ then for any
$T\leq \hat{T}$ we have
$$\ln(10)\cdot LR(T)= \int_0^T [k(C(t))-r]dt< T\cdot [k_{max}-r]\leq T_{min}\cdot [k_{max}-r]=\ln(10)\cdot LR_{target},$$
so that the target reduction cannot be achieved.

(iii) Assume now that \begin{equation}\label{inter}T_{min}<\hat{T}<T_{opt}.
\end{equation}
Note first that $\hat{T}>T_{min}$ implies
that, using (A2)
$$\lim_{c\rightarrow \infty}\hat{T}\cdot [k(c)-r]=\hat{T}\cdot [k_{max}-r]>T_{min}\cdot [k_{max}-r]=\ln(10)\cdot LR_{target},$$
hence by (A1) the equation \eqref{erad} has
a unique solution $\hat{c}$.

\eqref{erad} implies that
$$\frac{1}{\ln(10)}\int_0^{\hat{T}}[k(\hat{C}(t))-r]dt=\hat{T}\cdot [k(\hat{c})-r]=LR_{target},$$
so that the profile $\hat{C}(t)$ achieves 
eradication at time $\hat{T}$.

To show that $\hat{C}(t)$, defined by \eqref{chat} is the solution of Problem \ref{prmodified}, we assume that $C(t)$ is any concentration profile for which $LR(T)=LR_{target}$, where $T\leq \hat{T}$, and we need to show that if $AUC[C(t)]\geq AUC[\hat{C}(t)]$, with equality iff $C(t)=\hat{C}(t)$ for {\it{a.e.}} $t$. 

By \eqref{erad},\eqref{inter}, and 
\eqref{topte}, we have 
$$k(\hat{c})-r=\frac{\ln(10)\cdot LR_{target}}{\hat{T}}> \frac{\ln(10)\cdot LR_{target}}{T_{opt}}=k(c_{opt})-r,$$
which, by (A1), implies 
\begin{equation}\label{hgo}\hat{c}>c_{opt}.\end{equation}
We now define 
\begin{equation}\label{hr}\hat{r}=k(\hat{c})-\hat{c}k'(\hat{c}),\end{equation}
and claim that 
\begin{equation}\label{rr}\hat{r}>r.
\end{equation}
To show this we recall that 
in the proof of Lemma \ref{fmax} it was shown that the function $h(c)$ defined by
\eqref{defh} is monotone increasing for all $c>0$ in the concave case, and for all $c>c_{infl}$ in the sigmoidal case, in which case we also have
$c_{opt}>c_{infl}$. Therefore, since $h(c_{opt})=r$ and $h(\hat{c})=\hat{r}$, \eqref{hgo} implies \eqref{rr}.

By \eqref{hr} we have
$$k'(\hat{c})=\frac{k(\hat{c})-\hat{r}}{\hat{c}},$$
which, using Lemma \ref{fmax}, applied with $\hat{r}$ replacing $r$, implies that
$\hat{c}$ is the maximizer of $f_{\hat{r}}(c)$, so that
$$\frac{k(\hat{c})-\hat{r}}{\hat{c}}=\max_{c>0}\frac{k(c)-\hat{r}}{c}.$$
We therefore have, for all $c>0$,
\begin{equation}\label{ink}k(c)-\hat{r}\leq c\cdot \frac{k(\hat{c})-\hat{r}}{\hat{c}}.\end{equation}
If $C(t)$ is a concentration profile for which $LR(T)=LR_{target}$, where $T\leq \hat{T}$, then we have, using \eqref{rr},\eqref{ink}
\begin{eqnarray}\label{sin1}\ln(10)\cdot LR_{target}&=&\int_0^T [k(C(t))-r]dt=\int_0^T [k(C(t))-\hat{r}]dt
+T\cdot (\hat{r}-r)\nonumber\\
&\leq& \int_0^T [k(C(t))-\hat{r}]dt
+\hat{T}\cdot (\hat{r}-r)
\leq \frac{k(\hat{c})-\hat{r}}{\hat{c}} \int_0^T C(t)dt
+\hat{T}\cdot (\hat{r}-r)\nonumber\\
&\leq& \frac{k(\hat{c})-\hat{r}}{\hat{c}} \cdot AUC[C(t)]
+\hat{T}\cdot (\hat{r}-r),
\end{eqnarray}
which, togther with \eqref{erad}, implies
\begin{eqnarray*}\label{sin2}AUC[C(t)]&\geq& \left[\ln(10)\cdot LR_{target}-\hat{T}\cdot (\hat{r}-r)\right]\cdot \frac{\hat{c}}{k(\hat{c})-\hat{r}}\nonumber\\&=&\left[\hat{T}\cdot [k(\hat{c})-r]-\hat{T}\cdot (\hat{r}-r)\right]\cdot \frac{\hat{c}}{k(\hat{c})-\hat{r}}=\hat{T}\cdot \hat{c}=AUC[\hat{C}(t)],
\end{eqnarray*}
proving that $\hat{C}(t)$ indeed 
attains the minimal $AUC$ among the relevant concentration profiles. To show uniqueness, note that the equality $AUC[C(t)]=AUC[\hat{C}(t)]$
can hold only if all inequalities in \eqref{sin1} are in fact equalities, which in particular 
implies $T=\hat{T}$ and 
$k(C(t))=k(\hat{c})$, 
hence
$C(t)=\hat{c}$ for {\it{a.e.}} $0\leq t\leq \hat{T}$,  and also that 
$\int_{\hat{T}}^\infty C(t)dt=0$,
so that $C(t)=0$ for {\it{a.e.}} $t> \hat{T}$, hence $C(t)=\hat{C}(t)$ {\it{a.e.}}.	
\end{proof}

\section{Application to Hill-type pharmacodynamic  functions}
\label{hilf}

We now specialize the results to the case that the kill-rate function  is the 
Hill function $k_H(c)$ defined by \eqref{hill}, which 
allows us to obtain explicit expressions for 
the quantities of interest. We use 
these expressions to study the 
dependence of $c_{opt}$ and 
$T_{opt}$ on the relevant parameters.

In the case of a Hill-type pharmacodynamic function \eqref{deff} gives
$$f_r(c)=\frac{1}{c}\left(k_{max}\cdot\frac{c^{\gamma}}{C_{50}^{\gamma}+c^{\gamma}}-r\right)
=\frac{r}{c}\left(\alpha\cdot\frac{c^{\gamma}}{C_{50}^{\gamma}+c^{\gamma}}-1\right),$$
where $\alpha$ is the antimicrobial potency given by \eqref{dalpha}.
Solving the equation 
$f_r'(c)=0$, which is equivalent to
$$(\alpha-1)\left(\frac{c}{C_{50}}\right)^{2\gamma}-((\gamma-1)\alpha+2))\left(\frac{c}{C_{50}}\right)^\gamma -1=0,$$
we find that the optimal concentration is
\begin{equation}\label{copth}c_{opt}=c_{opt}(\gamma,\alpha)=C_{50}\cdot \left(\alpha-1\right)^{-\frac{1}{\gamma}}\cdot  \left[1+\alpha\cdot \left( \frac{\gamma-1}{2} + \sqrt{\left(\frac{\gamma-1}{2} \right)^2+\frac{\gamma}{\alpha} }\right)\right]^{\frac{1}{\gamma}},
\end{equation}
and substituting this into \eqref{topte} we find that the time
$T_{opt}$ for which this concentration
should be maintained is
\begin{equation}\label{topth}
	T_{opt}=T_{opt}(\gamma,\alpha)=T_2\cdot \frac{\ln(10)\cdot LR_{target}}{\ln(2)(\alpha-1)}\cdot 
	\left[1+\frac{\alpha}{\gamma}\cdot\left(\sqrt{\left(\frac{\gamma-1}{2}\right)^2+\frac{\gamma}{\alpha} }-\frac{\gamma-1}{2}\right)\right],
\end{equation}
where $T_2$ is the microbial doubling time in the absence of antimicrobial (see \eqref{dtime}).

Note that:
\begin{itemize}
	\item The optimal concentration $c_{opt}$ is 
	linearly dependent on the half-saturation
	concentration $C_{50}$. Therefore in our presentation of numerical results below we provide the values of the dimensionless ratio $\frac{C_{opt}}{C_{50}}$.
	
	\item The optimal duration
	$T_{opt}$ is linearly dependent 
	on the target log-reduction
	$LR_{target}$, as well as on the microbial doubling time 
	$T_2$. Therefore in presentation of numerical results we provide the values of the dimensionless ratio  $\frac{T_{opt}}{T_2}$.
	Note that
	$T_{opt}$ does {\it{not}} depend on the value of the 
	half-saturation constant $C_{50}$.
	
	\item As a consequence of the above and of \eqref{aucopt},
	the value $AUC_{opt}$ depends linearly on both 
	$T_2$ and $C_{50}$, so that in the presentation of numerical
	results we provide the value of the dimensionless ratio
	$\frac{AUC}{T_2\cdot C_{50}}$.
	
	\item In the special case $\gamma=1$, which is commonly employed (referred to as the $E_{max}$ model \cite{meibohm}), the somewhat complicated expressions given above simplify to
	$$\gamma =1\;\;\Rightarrow\;\;
	c_{opt}=\frac{C_{50}}{\sqrt{\alpha}-1},\;\;\;
		T_{opt}=T_2\cdot \frac{\ln(10)\cdot LR_{target}}{\ln(2)(
			\sqrt{\alpha}-1)}.
	$$
\end{itemize}

Table 2 presents the optimal concentration, duration, and $AUC$ for 
parameter values in the range which is typical for most antimicrobials. 
According to studies estimating parameters for 
various antimicrobials, reviewed in (\cite{czock}),
the Hill coefficient of $\gamma$ is in most cases 
in the range $0.5-5$, and the potency $\alpha=\frac{k_{max}}{r}$ is mostly in the range 
$2-6$. In the calculation of $T_{opt}$ and 
$AUC_{opt}$ we have taken $LR_{target}=7$ -- in view of the linear dependence of these quantities on $LR_{target}$, to obtain $T_{opt},AUC_{opt}$ for any
value of $LR_{target}$ one simply needs to multiply the value in the table by $\frac{LR_{target}}{7}$.
Note that the quantities $\frac{c_{opt}}{C_{50}}$,$\frac{T_{opt}}{T_2}$,$\frac{AUC_{opt}}{T_2\cdot C_{50}}$ presented in these tables, as well as 
	the parameters $\alpha,\gamma$ which they depend on, are all non-dimensional, so that the numbers are valid independently of the units used to measure concentration and time.

We can use the explicit expressions
\eqref{copth},\eqref{topth} to study the nature of the 
dependence of $c_{opt}$ and $T_{opt}$ on the  drug potency $\alpha$ and the Hill exponent $\gamma$. The results are 
given in Propositions \ref{proper1},\ref{proper2} - we 
omit the derivation of these results since they are routine applications of elementary calculus arguments.

The following proposition shows that higher antimicrobial potency $\alpha$ 
leads the optimal concentration profile to involve both a lower 
concentration and a shorter duration
(see also Figure \ref{res}).

\begin{prop}\label{proper1}
	(a) For fixed $\gamma>0$, the function
	$c_{opt}(\alpha)=c_{opt}(\gamma,\alpha)$
	($\alpha>1$) is monotone decreasing, with
	\begin{equation}\label{lx}\lim_{\alpha\rightarrow 1+}c_{opt}(\alpha)=+\infty.\;\;\end{equation}
	and
	\begin{itemize}
		\item[(i)] In the concave case $\gamma\leq 1$:
		$\lim_{\alpha\rightarrow \infty}c_{opt}(\alpha)=0$.
		
		\item[(ii)] In the sigmoidal case $\gamma> 1$:
		$\lim_{\alpha\rightarrow \infty}c_{opt}(\alpha)=C_{50}\cdot\left(\gamma-1\right)^{\frac{1}{\gamma}}>0$.
	\end{itemize}
	
	\noindent
	(b) For fixed $\gamma>0$, the function
	$T_{opt}(\alpha)=T_{opt}(\gamma,\alpha)$
	($\alpha>1$)
	is monotone decreasing, with
	\begin{equation}\label{llx}\lim_{\alpha\rightarrow 1+}T_{opt}(\alpha)=+\infty,\;\;\;\lim_{\alpha\rightarrow \infty}T_{opt}(\alpha)=0.\end{equation}
\end{prop}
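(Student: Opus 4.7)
My plan is to exploit the explicit closed-form expressions \eqref{copth}--\eqref{topth} for the Hill case, together with the implicit characterization obtained in Lemma \ref{fmax}. Substituting $u = (c_{opt}/C_{50})^\gamma$ reduces the equation $h(c_{opt})=r$ to the quadratic
\begin{equation*}
F(u,\alpha)\;:=\;(\alpha-1)u^2 - [\alpha(\gamma-1)+2]u - 1 = 0,
\end{equation*}
and I would work with this equation rather than the unwieldy radical expression \eqref{copth} whenever possible.

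For part (a), I would prove monotonicity by implicit differentiation. The identity $F=0$ rearranges to $(\alpha-1)u - [\alpha(\gamma-1)+2] = 1/u > 0$, which immediately yields $\partial F/\partial u = 2(\alpha-1)u - [\alpha(\gamma-1)+2] > (\alpha-1)u > 0$. Moreover $\partial F/\partial\alpha = u[u-(\gamma-1)] > 0$, the second factor being positive by direct inspection of the Hill form of $h(c)$, or in the sigmoidal case by the fact that $c_{opt}>c_{infl}$ established in Lemma \ref{fmax}. The implicit function theorem then gives $du/d\alpha<0$, whence $c_{opt}(\alpha)$ is strictly decreasing. The limits are then read off \eqref{copth}: as $\alpha\to 1^+$ the prefactor $(\alpha-1)^{-1/\gamma}$ diverges while the bracketed factor stays bounded, giving $c_{opt}\to+\infty$; as $\alpha\to\infty$, the dichotomy between the cases $\gamma\leq 1$ and $\gamma>1$ arises from the sign of $(\gamma-1)/2$ appearing inside the square root.

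For part (b), the key step is to bypass the cumbersome formula \eqref{topth} by deriving a simpler identity for $k(c_{opt})-r$. Combining the characterization \eqref{copte} (which gives $k(c_{opt})-r = c_{opt}k'(c_{opt})$) with the rearrangement $\alpha u[u-(\gamma-1)]=(1+u)^2$ of $F=0$, I would obtain
\begin{equation*}
k(c_{opt}(\alpha)) - r \;=\; \frac{r\gamma}{u(\alpha) - (\gamma-1)},
\end{equation*}
and thus, via \eqref{topte},
\begin{equation*}
T_{opt}(\alpha) \;=\; \frac{\ln(10)\,LR_{target}\,(u(\alpha) - (\gamma-1))}{r\gamma}.
\end{equation*}
Since $T_{opt}$ is an affine increasing function of $u$, its monotonicity in $\alpha$ follows immediately from the monotonicity of $u$ established in (a), and the limits at $\alpha\to 1^+$ and $\alpha\to\infty$ follow from the corresponding limits of $u$ determined in (a).

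The main technical obstacle is the asymptotic analysis as $\alpha\to\infty$, where the sign of $\gamma-1$ governs whether the square-root in \eqref{copth} cancels the linear term in the bracket to leading order; in the concave subcase $\gamma<1$ this near-cancellation requires a second-order Taylor expansion of the radical to extract the correct leading behavior. Once these asymptotics are in hand, the remaining manipulations reduce to elementary algebra, consistent with the author's remark that the derivation of Propositions \ref{proper1} and \ref{proper2} is a routine application of calculus.
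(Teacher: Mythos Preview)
The paper does not actually give a proof of this proposition; it explicitly states that the derivations of Propositions \ref{proper1} and \ref{proper2} are omitted as ``routine applications of elementary calculus arguments.'' Your plan therefore cannot be compared against an existing argument, but it can be checked on its own merits.

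Your strategy is sound and, in fact, rather clean: working with the quadratic $F(u,\alpha)=0$ for $u=(c_{opt}/C_{50})^\gamma$, using implicit differentiation for monotonicity in (a), and for (b) exploiting the identity $k(c_{opt})-r=r\gamma/(u-(\gamma-1))$ to rewrite $T_{opt}$ as an affine increasing function of $u$. One minor correction: your alternative justification ``in the sigmoidal case by the fact that $c_{opt}>c_{infl}$'' only yields $u>(\gamma-1)/(\gamma+1)$, not $u>\gamma-1$. The correct (and simplest) argument is the one you list first: from the Hill form one computes $h(c)=k_{max}\,u\,[u-(\gamma-1)]/(1+u)^2$, and $h(c_{opt})=r>0$ forces $u>\gamma-1$.

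There is, however, a genuine gap in your treatment of the limit $\alpha\to\infty$ in part (b). Your own formula $T_{opt}=\ln(10)\,LR_{target}\,[u-(\gamma-1)]/(r\gamma)$, combined with the limits of $u$ you obtained in (a), gives $u\to\gamma-1$ when $\gamma>1$ and $u\to 0$ when $\gamma\le 1$. For $\gamma\ge 1$ this yields $u-(\gamma-1)\to 0$, hence $T_{opt}\to 0$ as claimed. But for $\gamma<1$ it gives
\[
T_{opt}(\alpha)\;\longrightarrow\;\frac{\ln(10)\,LR_{target}\,(1-\gamma)}{r\gamma}\;>\;0,
\]
not zero. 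One can verify this directly from \eqref{topth}: for $\gamma<1$ the bracketed factor behaves like $\alpha(1-\gamma)/\gamma$ as $\alpha\to\infty$, which cancels the $1/(\alpha-1)$ prefactor and leaves a positive constant (the numerical values in Table~2 for $\gamma=0.5$ are consistent with convergence to a limit near $23$, not to $0$). In other words, your method is correct, but carrying it through carefully shows that the asserted limit $\lim_{\alpha\to\infty}T_{opt}(\alpha)=0$ fails when $\gamma<1$; the statement of the proposition appears to be inaccurate in that subcase. You should flag this rather than claim the limit ``follows.''
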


\begin{table}
	\centering
	
	\begin{minipage}{.4\linewidth}
		\centering
		{\bfseries\strut Optimal concentration $\frac{c_{opt}}{C_{50}}$}
		\begin{tabular}{|l||*{5}{c|}}\hline
			\backslashbox{$\gamma$}{$\alpha$}
			&\makebox[2em]{2.0}&\makebox[2em]{3.0}&\makebox[2em]{4.0}
			&\makebox[2em]{5.0}&\makebox[2em]{6.0}\\\hline\hline
		0.5 & 2.62 & 0.71 & 0.33 & 0.19 & 0.13\\\hline
		1.0 & 2.41 & 1.37 & 1.00 & 0.81 & 0.69\\\hline
		2.0 & 2.06 & 1.64 & 1.47 & 1.37 & 1.31\\\hline
		3.0 & 1.83 & 1.60 & 1.51 & 1.46 & 1.42\\\hline
		4.0 & 1.69 & 1.54 & 1.48 & 1.44 & 1.42 \\\hline
		5.0 & 1.59 & 1.48 & 1.43 & 1.41 & 1.39 \\\hline
		\end{tabular}

	\end{minipage}%
	\quad 
	\bigskip
	
	\begin{minipage}{.4\linewidth}
		\centering
		{\bfseries\strut Optimal duration $\frac{T_{opt}}{T_2}$}
		\begin{tabular}{|l||*{5}{c|}}\hline
			\backslashbox{$\gamma$}{$\alpha$}
			&\makebox[2em]{2.0}&\makebox[2em]{3.0}&\makebox[2em]{4.0}
			&\makebox[2em]{5.0}&\makebox[2em]{6.0}\\\hline\hline
			0.5 & 98.5 & 62.5 & 50.1 & 43.8 & 39.9 \\\hline
			1.0 & 56.1 & 31.8 & 23.3 & 18.8 & 16.0 \\\hline
			2.0 & 37.6 & 19.6 & 13.4 & 10.3 & 8.3 \\\hline
			3.0 & 32.3 & 16.4 & 11.1 & 8.4 & 6.7 \\\hline
			4.0 & 29.8 & 15.1 & 10.1 & 7.6 & 6.1 \\\hline
			5.0 & 28.4 & 14.3 & 9.6 & 7.2 & 5.8 \\\hline
		\end{tabular}
		
	\end{minipage}%
	\quad 
	\bigskip
	
	\begin{minipage}{.4\linewidth}
		\centering
		{\bfseries\strut Optimal AUC $\frac{AUC_{opt}}{T_2\cdot C_{50}}$}
		\begin{tabular}{|l||*{5}{c|}}\hline
			\backslashbox{$\gamma$}{$\alpha$}
			&\makebox[2em]{2.0}&\makebox[2em]{3.0}&\makebox[2em]{4.0}
			&\makebox[2em]{5.0}&\makebox[2em]{6.0}\\\hline\hline
			0.5 & 257.9 & 44.4 & 16.7 & 8.5 & 5.1 \\\hline
			1.0 & 135.5 & 43.4 & 23.3 & 15.2 & 11.1 \\\hline
			2.0 & 77.4 & 32.1 & 19.7 & 14.1 & 10.9 \\\hline
			3.0 & 59.1 & 26.4 & 16.7 & 12.2 & 9.6 \\\hline
			4.0 & 50.3 & 23.1 & 14.9 & 11.0 & 8.7 \\\hline
			5.0 &  45.0 & 21.1 & 13.7 & 10.1 & 8.0 \\\hline
		\end{tabular}

	\end{minipage}%
	\quad 
	\bigskip 
	
			\caption{Optimal concentration profiles for 
					different values of $\alpha=\frac{k_{max}}{r}$ and of the Hill exponent $\gamma$.
					Top: Optimal concentration $\frac{c_{opt}}{C_{50}}$, Middle: optimal duration $\frac{T_{opt}}{T_2}$ for achieving log-reduction $LR_{target}=7$.
					Bottom: value of $\frac{AUC}{T_2\cdot C_{50}}$ attained by the optimal schedule, assuming $LR_{target}=7$. }
	
\end{table}

\begin{figure}
	\begin{center}
		\includegraphics[width=0.45\linewidth]{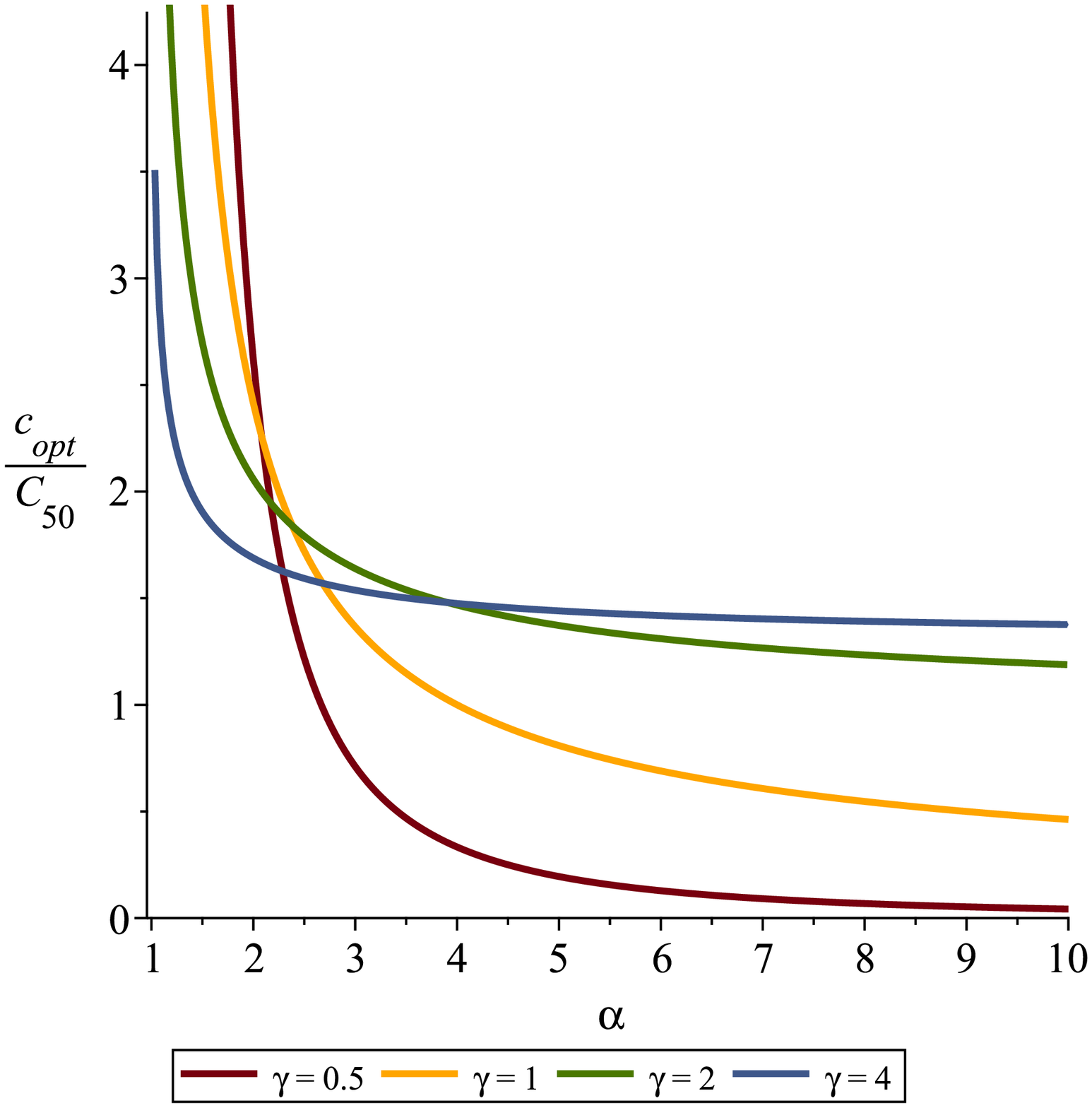}
		\includegraphics[width=0.45\linewidth]{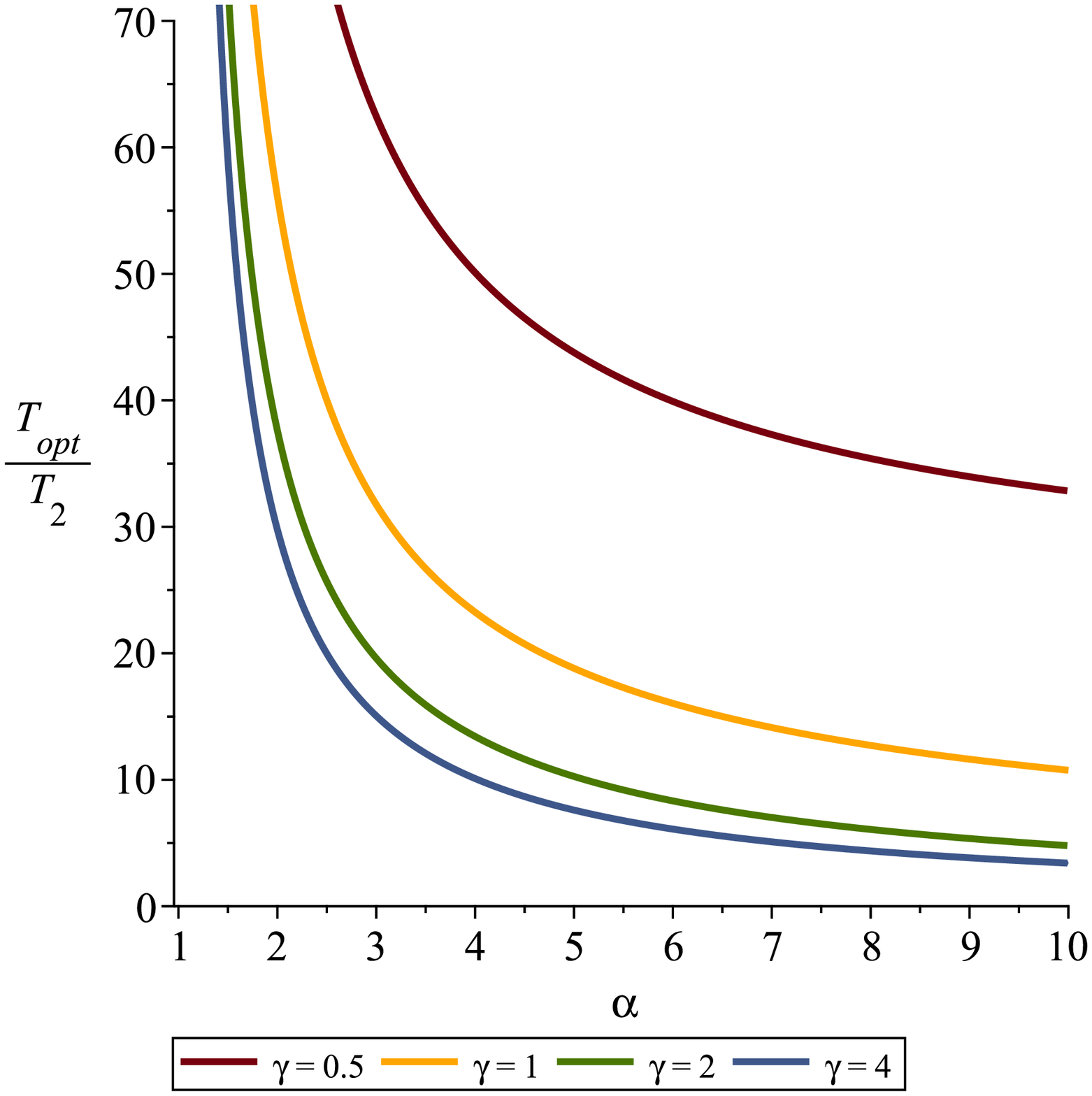}
	\end{center}
	\caption{The parameters defining the `ideal' concentration profile as functions of the drug potency $\alpha=\frac{k_{max}}{r}$, for 
		different values of the Hill exponent $\gamma$.
		Left: the ratio of $c_{opt}$ to the half-saturation value $C_{50}$ of the antimicrobial. Right: the ratio of $T_{opt}$ to the
		microbial doubling time $T_2$. Here it is assumed that $LR_{target}=7$.}
	\label{res}
\end{figure}

\begin{figure}
	\begin{center}
		\includegraphics[width=0.45\linewidth]{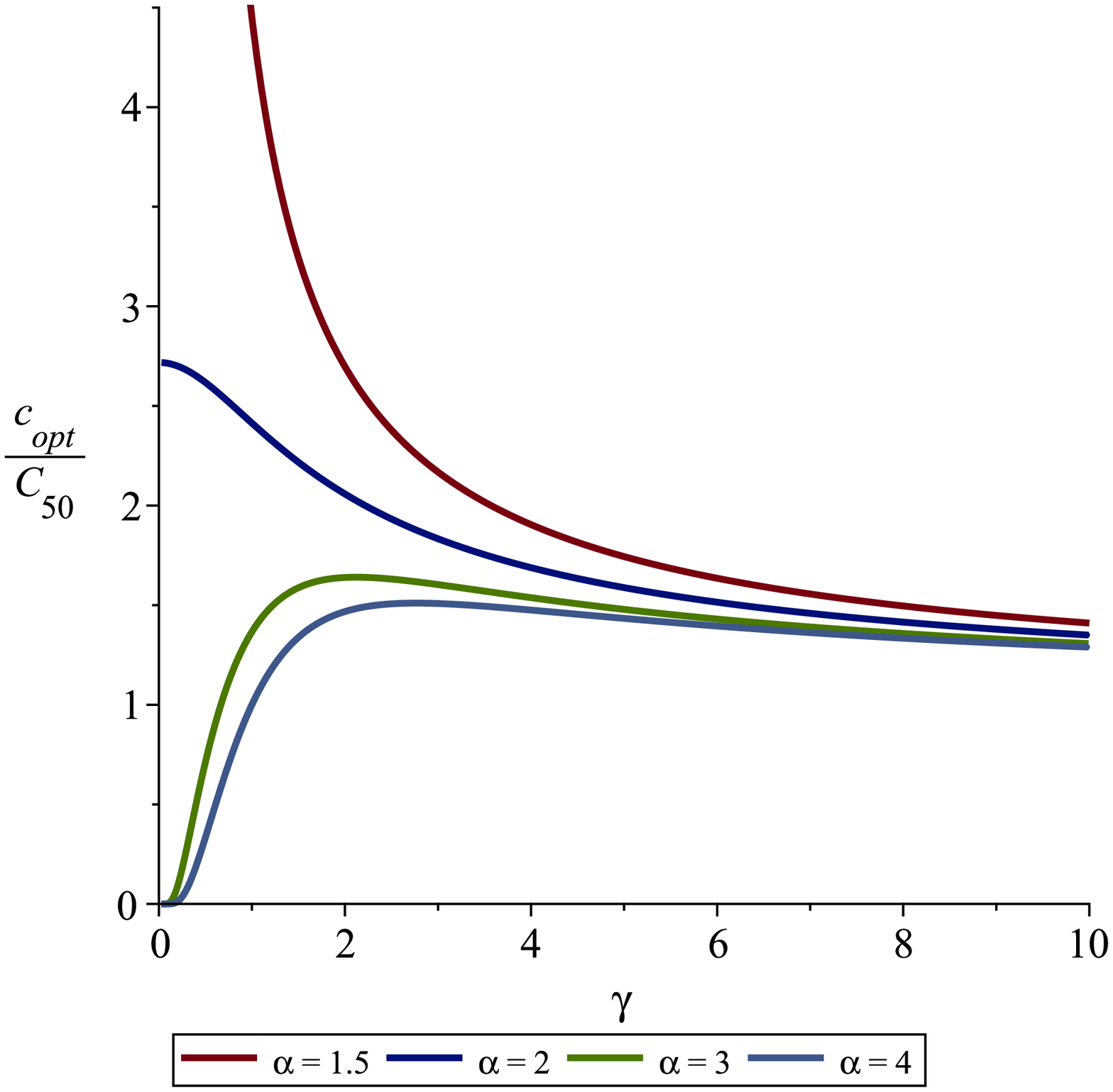}
		\includegraphics[width=0.45\linewidth]{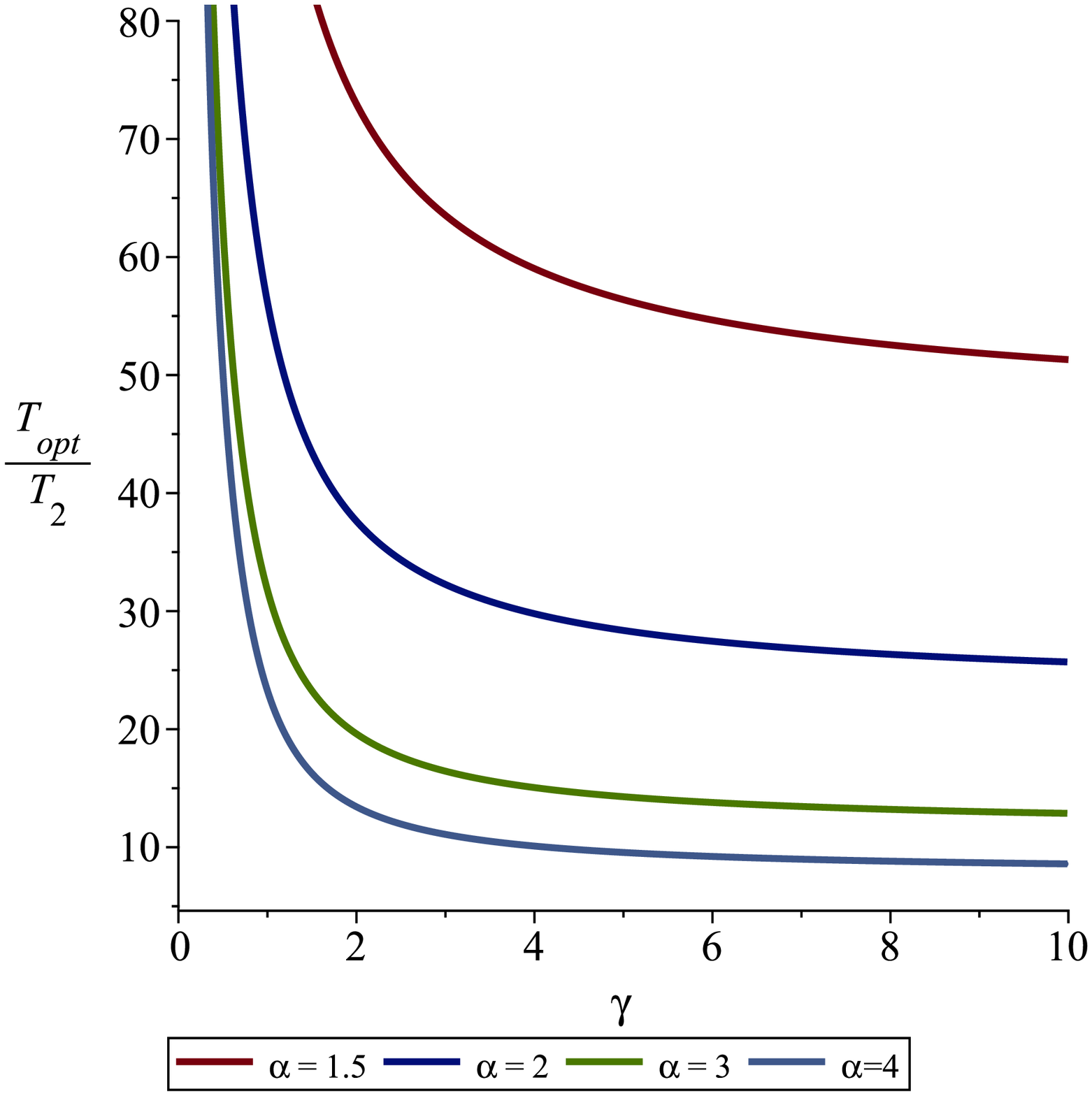}
	\end{center}
	\caption{The parameters defining the optimal concentration profile as functions of the Hill exponent $\gamma$, for 
		different values of the potency $\alpha$.
		Left: the ratio of $c_{opt}$ to the half-saturation value $C_{50}$ of the antimicrobial. Right: the ratio of $T_{opt}$ to the
		microbial doubling time $T_2$. Here it is assumed that $LR_{target}=7$.}
	\label{res2}
\end{figure}

The dependence 
of $c_{opt}$, $T_{opt}$ on the Hill exponent
$\gamma$ is described in the next proposition. Note that the shape of 
the function $c_{opt}(\gamma)$ is different for 
$\alpha<2$, $\alpha=2$, and $\alpha>2$ - see also Figure \ref{res2}.

\begin{prop}\label{proper2}
	(a) For fixed $\alpha>1$, the function
	$c_{opt}(\gamma)=c_{opt}(\gamma,\alpha)$ satisfies
	\begin{equation}\label{lllx}\lim_{\gamma\rightarrow \infty }c_{opt}(\gamma)=C_{50},\end{equation}
	and
	\begin{itemize}
		\item[(i)] If $1<\alpha<2$ then $c_{opt}(\gamma)$ is
		monotone decreasing, with 
		$\lim_{\gamma\rightarrow 0+}c_{opt}(\gamma)=+\infty$.
		
		\item[(ii)] If $\alpha=2$ then $c_{opt}(\gamma)$
		is monotone decreasing, with 
		$\lim_{\gamma\rightarrow 0+}c_{opt}(\gamma)=e\cdot C_{50}$.
		
		\item[(iii)] If $\alpha>2$, then $c_{opt}(\gamma)$ is increasing for small 
		$\gamma$ and decreasing for large $\gamma$, and 
		$\lim_{\gamma\rightarrow 0+}c_{opt}(\gamma)=0$.
	\end{itemize}
	
	\noindent
	(b) For any $\alpha>1$, the function $T_{opt}(\gamma)=T_{opt}(\gamma,\alpha)$ is monotone 
	decreasing, with
	$$\lim_{\gamma\rightarrow 0+}T_{opt}(\gamma)=+\infty,\;\;\lim_{\gamma\rightarrow \infty}T_{opt}(\gamma)=0.$$
\end{prop}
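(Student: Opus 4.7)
The plan is to proceed from the explicit formulas \eqref{copth} and \eqref{topth}, using the rationalization $\sqrt{a^2+b}-a=b/(\sqrt{a^2+b}+a)$ applied with $a=(\gamma-1)/2$ and $b=\gamma/\alpha>0$. Writing
$$D(\gamma):=\sqrt{\bigl(\tfrac{\gamma-1}{2}\bigr)^2+\tfrac{\gamma}{\alpha}}+\tfrac{\gamma-1}{2},$$
the bracketed factor in \eqref{topth} collapses to $1+1/D(\gamma)$, while that in \eqref{copth} becomes $1+\alpha D(\gamma)$. This common reformulation makes both parts tractable.

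For (b), monotonicity of $T_{opt}$ reduces to strict monotonicity of $D$ on $(0,\infty)$. Direct differentiation gives
$$D'(\gamma)=\frac{(\gamma-1)/2+1/\alpha}{2\sqrt{((\gamma-1)/2)^2+\gamma/\alpha}}+\frac{1}{2},$$
and after rearrangement and a single squaring the inequality $D'(\gamma)>0$ collapses to $1/\alpha^2<1/\alpha$, which is equivalent to the standing assumption $\alpha>1$. The boundary limits then follow from the asymptotics of $D(\gamma)$ at the two endpoints, with $D(\gamma)\sim\gamma/\alpha\to 0$ as $\gamma\to 0^+$ (yielding $T_{opt}\to\infty$) and $D(\gamma)\to\infty$ as $\gamma\to\infty$.

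For (a), I would set $u(\gamma):=\log(c_{opt}/C_{50})=\phi(\gamma)/\gamma$, where $\phi(\gamma):=\log(1+\alpha D(\gamma))-\log(\alpha-1)$. The limits split cleanly: as $\gamma\to 0^+$, a first-order expansion gives $1+\alpha D(\gamma)=1+\gamma+o(\gamma)$, so $[1+\alpha D(\gamma)]^{1/\gamma}\to e$, and the three cases in (a) correspond to the three regimes of $(\alpha-1)^{-1/\gamma}$ as $\gamma\to 0^+$, determined by whether $\alpha-1$ is less than, equal to, or greater than $1$. As $\gamma\to\infty$, $\log[1+\alpha D(\gamma)]\sim\log(\alpha\gamma)$, so $\phi(\gamma)/\gamma\to 0$ and $c_{opt}\to C_{50}$, giving \eqref{lllx}.

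The genuinely delicate step is the monotonicity of $u=\phi/\gamma$. The sign of $u'(\gamma)$ equals the sign of $\psi(\gamma):=\gamma\phi'(\gamma)-\phi(\gamma)$, and the clean identity $\psi'(\gamma)=\gamma\phi''(\gamma)$ links monotonicity of $\psi$ to concavity of $\phi$. The boundary value $\psi(0^+)=-\phi(0)=\log(\alpha-1)$ is negative, zero, or positive in the three regimes $\alpha<2$, $\alpha=2$, $\alpha>2$ respectively; combined with $\psi(\gamma)\to-\infty$ at infinity and strict concavity of $\phi$, one obtains $\psi<0$ throughout in cases (i),(ii) and a single sign change in case (iii). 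The main obstacle I anticipate is verifying concavity of $\phi$; rather than differentiating the square-root expression twice, I would work with the quadratic characterization $(\alpha-1)x^2-((\gamma-1)\alpha+2)x-1=0$ satisfied by $x=e^{\gamma u}$, apply implicit differentiation to obtain a closed-form expression for $x'(\gamma)/x$, and translate it into the desired sign information on $u'(\gamma)$ directly, bypassing the need to handle $\phi''$ explicitly.
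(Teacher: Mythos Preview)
The paper omits the proof of this proposition, stating only that the derivations ``are routine applications of elementary calculus arguments,'' so there is nothing to compare against and your plan must stand on its own.

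Your reduction via $D(\gamma)$ is clean and the monotonicity argument in~(b) is correct. But carry your own formula through at the right endpoint: you have shown
\[
T_{opt}=\frac{T_2\,\ln(10)\,LR_{target}}{\ln(2)(\alpha-1)}\Bigl(1+\frac{1}{D(\gamma)}\Bigr),
\]
and $D(\gamma)\to\infty$ gives $1+1/D(\gamma)\to 1$, so $T_{opt}$ tends to the positive constant $T_{min}=\ln(10)\,LR_{target}/(k_{max}-r)$ of \eqref{tmin}, not to~$0$. (The numbers in Table~2 are consistent with convergence to $T_{min}$, not to~$0$.) The limit stated in the proposition appears to be a slip, and your phrase ``the boundary limits then follow'' hides the fact that your own computation does not --- and cannot --- produce the claimed value.

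For~(a), your limits are correct, but the ``strict concavity of $\phi$'' step fails exactly in the case where you lean on it most. Your implicit-differentiation fallback is in fact the right primary tool, and it resolves the issue cleanly: from the quadratic you get
\[
\phi'(\gamma)=(\log x)'=\frac{x'}{x}=\frac{\alpha}{\sqrt{\bigl((\gamma-1)\alpha+2\bigr)^2+4(\alpha-1)}},
\]
so the sign of $\phi''$ is opposite to the sign of $\Delta'(\gamma)$, where $\Delta(\gamma)=\bigl((\gamma-1)\alpha+2\bigr)^2+4(\alpha-1)$. For $1<\alpha\le 2$ the vertex $\gamma=1-2/\alpha$ lies at or to the left of~$0$, so $\Delta$ is increasing on $(0,\infty)$, $\phi$ is concave, and your argument runs as written. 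For $\alpha>2$, however, the vertex lies in $(0,1)$, $\Delta$ is decreasing then increasing, and $\phi$ is \emph{convex} near $\gamma=0$: strict concavity is simply false there. The repair is straightforward --- $\psi'=\gamma\phi''$ then changes sign once from~$+$ to~$-$, so $\psi$ is unimodal rather than monotone; combined with $\psi(0^+)=\log(\alpha-1)>0$ and $\psi(\infty)=-\infty$ this still gives exactly one zero of $\psi$, hence one sign change of $u'$, as claimed in~(iii). But this needs to be argued, not assumed.
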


\section{Pharmacokinetic considerations: the optimal bolus+continuous dosing schedule}
\label{pharmacokinetic}

In the preceding analysis we considered arbitrary antimicrobial concentration profiles $C(t)$. In practice,  however, the concentration profile cannot be chosen at will, since it is the result of a dosage plan and of the pharmacokinetics of the drug. Thus,
while the concentration profile 
given by Theorem \ref{mainr} is the 
optimal one, we show below that it is impossible to achieve this profile precisely, due the fact that the optimal 
profile $C_{opt}(t)$ has a discontinuity at
$t=T_{opt}$, while realistic pharmacokinetics 
precludes such a sharp cutoff.
It then becomes of interest to approximate the 
`ideal' concentration profile, to the extent possible, by a {\it{pharmacokinetically feasible}} one. We will consider one simple and 
natural method of doing so, and determine its
optimal version.

We assume a basic one-compartment pharmacokinetic model with first order degradation
kinetics - a reasonable choice for most commonly prescribed antimicrobials (\cite{bouvier}).
The dosing rate - the rate at which 
antimicrobial is added into the compartment, will be denoted by $d(t)$.
The concentration profile is then given by
the solution of the differential equation
\begin{equation}\label{pk}\frac{dC}{dt}=V^{-1}d(t)-\mu C(t),\;\;\;C(0)=0,\end{equation}
where $V$ is the volume of distribution and $\mu$ is the degradation/removal rate of the antimicrobial, so that
\begin{equation}\label{dtau}\tau=\frac{\ln(2)}{\mu}\end{equation}
is the antimicrobial half-life. 
Making the natural assumption that the cumulative dose
$$D=\int_0^\infty d(t)dt$$
is finite, it follows that
$C(\infty)=\lim_{t\rightarrow \infty}C(t)=0$, so that, by integrating
\eqref{pk} over $[0,\infty)$ we obtain
$$0=C(\infty)-C(0)=V^{-1}\int_0^\infty d(t)dt-\mu\cdot AUC\;\;$$
hence
\begin{equation}\label{aucd}AUC=\mu^{-1}V^{-1}D,\end{equation}
a relation that is well-known in pharmacokinetics (\cite{derendorf,rescigno}).
Note that this shows that the objective of minimizing the $AUC$ is equivalent to that of 
minimizing the cumulative dose. An analogous linear relation between cumulative dosage $D$
and $AUC$ can be derived for more complicated (multi-compartment) pharmacokinetics.

The solution of \eqref{pk} is given by
\begin{equation}\label{csol}C(t)=V^{-1}\int_0^t e^{-\mu(t-s)}d(s)ds.\end{equation}
We note that \eqref{csol} is meaningful even if
$d(t)$ is not a function, but is rather an
arbitrary non-negative measure - and may therefore include $\delta$-functions which represent 
{\it{bolus doses}}, that is a finite amount
of antimicrobial which is injected instantaneously, as we shall do below.
In any case, $C(t)$ given by \eqref{csol}
will be positive for all $t>0$ sufficiently large, so that the 
function $C_{opt}(t)$ given by \eqref{copt1} cannot be represented in the form \eqref{csol}, that is, it is not pharmacokinetically feasible.

We can, however, generate concentration
profiles which take a constant value
$\bar{c}$ for a duration $0\leq t\leq T_{bc}$ by
administering a bolus loading dose of size $V\bar{c}$,
at time $t=0$ to raise the concentration 
to $\bar{c}$, and thereafter supplying the drug as a continuous infusion at rate $\mu V\bar{c}$ up to time $T_{bc}$, so as to maintain the concentration $\bar{c}$. This is known as a {\it{bolus+continuous}} ($bc$) dosage schedule (\cite{derendorf}). Note that in order to achieve reduction in the microbial load
we must take $\bar{c}>zMIC$.
The expression for this dosing schedule is thus
\begin{equation}\label{ds}d_{bc}(t)=V\bar{c}\cdot [\delta(t)+\mu \cdot H(T_{bc}-t)],\end{equation}
where the $\delta$-function represents 
the bolus dose, and $H$ is the 
Heaviside function: $H(t)=0$ for $t<0$ and
$H(t)=1$ for $t\geq 0$.

The resulting concentration profile, given by the solution of \eqref{pk}, will be 
\begin{equation}\label{BC}C_{bc}(t)=\begin{cases}
		\bar{c} & t\leq T_{bc}\\
		\bar{c}e^{-\mu (t-T_{bc})}& t>T_{bc}
	\end{cases}.\end{equation}
We now formulate and study the problem of optimizing a bolus+continuous dosing schedule.

\begin{prob}\label{lc}
	Given a target value $LR_{target}$,
	find, among all dosing schedules of the form \eqref{ds}, parameterized by $\bar{c}$ and 
	$T_{bc}$, for which the corresponding 
	log-reduction is $LR_{max}[C_{bc}(t)]=LR_{target}$, the 
	one for which the $AUC$	is minimal.
\end{prob}

The solution of this problem is given by
\begin{theorem}\label{main2}
	Assume (A1)-(A3) and $\alpha>1$.
	Define
	\begin{equation}\label{dphi0}
		\rho=\int_{zMIC}^{c_{opt}}
		\frac{k(u)-r}{u}du,\end{equation}
	where $zMIC$ is the solution of \eqref{dzMIC} and $c_{opt}$ is the solution of \eqref{copte}.
	
	Then:
	
	(i) If $\rho<\ln(10)\cdot LR_{target}\cdot \mu$, then the solution of 
	Problem \ref{lc} is given by
	\begin{equation}\label{ds1}d_{opt}(t)=V\cdot c_{opt}\cdot [\delta(t)+\mu \cdot H(T_{bc,opt}-t)],\end{equation}
	where
	\begin{equation}\label{barts}T_{bc,opt}=\frac{\ln(10)\cdot LR_{target} -\mu^{-1}\rho}{k(c_{opt})-r}.\end{equation}
	The resulting concentration profile, given by the solution of \eqref{pk}, is
	\begin{equation}\label{BCopt}C_{bc,opt}(t)=\begin{cases}
			c_{opt} & t\leq T_{bc,opt}\\
			c_{opt}e^{-\mu (t-T_{bc,opt})}& t>T_{bc,opt}
		\end{cases}.\end{equation}
	The target log-reduction $LR_{target}$ of the microbial population will be achieved at time
	\begin{equation}\label{tso}T^*=T_{bc,opt}+\frac{1}{\mu}\cdot \ln\left(\frac{c_{opt}}{zMIC} \right),\end{equation}
	and the corresponding $AUC$ is
	\begin{equation}\label{auc2}AUC_{bc,opt}=\left[\mu^{-1}+ \frac{\ln(10)\cdot LR_{target}- \mu^{-1}\rho}{k(c_{opt})-r}\right]\cdot c_{opt}.\end{equation}
	
	(ii) If $\rho\geq \ln(10)\cdot LR_{target}\cdot \mu$, the solution of Problem \ref{lc} is given by
	$$d_{opt}(t)=Vc^*\delta(t),$$
	where $c^*$ is the solution of the equation
	\begin{equation}\label{cstar}\int_{zMIC}^{c^*}
		\frac{k(u)-r}{u}du=\ln(10)\cdot LR_{target}\cdot \mu,\end{equation}
	so that
	$$C_{bc,opt}(t)=c^*e^{-\mu t}.$$
	The target microbial population 
	will be reached at time
	$$T^*=\frac{1}{\mu}\cdot \ln\left(\frac{c^*}{zMIC} \right),$$
	and
	$$AUC_{bc,opt}=\mu^{-1}c^*.$$
\end{theorem}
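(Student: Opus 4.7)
My plan is to reduce Problem \ref{lc} to a one-dimensional optimization over $\bar{c}$, by first using the eradication constraint to express $T_{bc}$ as a function of $\bar{c}$, and then minimizing the resulting $AUC(\bar{c})$.

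\textbf{Step 1 (explicit formulas for $LR_{max}$ and $AUC$).} Using the profile \eqref{BC}, I would first compute
\[
AUC[C_{bc}] \;=\; \bar{c}\,T_{bc} + \frac{\bar{c}}{\mu}.
\]
For the log-reduction, since $\frac{d LR}{dT}=\frac{1}{\ln(10)}[k(C_{bc}(T))-r]$ and $C_{bc}(T)\ge zMIC$ iff $T\le T^\star := T_{bc}+\mu^{-1}\ln(\bar c/zMIC)$ (assuming $\bar c>zMIC$, which is needed for any reduction), $LR(T)$ is maximized at $T=T^\star$. Changing variables $u=\bar c\,e^{-\mu(s-T_{bc})}$ on the decaying part, this yields
\[
\ln(10)\cdot LR_{max}[C_{bc}] \;=\; T_{bc}\bigl(k(\bar c)-r\bigr)+\frac{1}{\mu}\int_{zMIC}^{\bar c}\frac{k(u)-r}{u}\,du .
\]

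\textbf{Step 2 (reduce to one variable).} Setting $LR_{max}=LR_{target}$ and writing $G(\bar c):=\int_{zMIC}^{\bar c}\frac{k(u)-r}{u}\,du$, I solve for $T_{bc}$:
\[
T_{bc}(\bar c)=\frac{\ln(10)\,LR_{target}-\mu^{-1}G(\bar c)}{k(\bar c)-r},
\]
which is only admissible (i.e.\ $T_{bc}\ge 0$) when $G(\bar c)\le \ln(10)\,LR_{target}\,\mu$. Substituting gives $AUC$ as a function of $\bar c$ alone.

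\textbf{Step 3 (the critical-point computation — this is the main technical step).} Differentiating, and using $G'(\bar c)=(k(\bar c)-r)/\bar c$, a cancellation occurs and I obtain the clean identity
\[
\frac{dAUC}{d\bar c}\;=\;T_{bc}(\bar c)\cdot\frac{k(\bar c)-r-\bar c\,k'(\bar c)}{k(\bar c)-r}.
\]
Thus interior critical points (with $T_{bc}>0$) satisfy exactly \eqref{copte}, whose unique solution is $c_{opt}$ by Lemma \ref{fmax}. A sign analysis — using that the function $\bar c\mapsto (k(\bar c)-r)/\bar c$ increases on $(0,c_{opt})$ and decreases on $(c_{opt},\infty)$ — shows that $\bar c=c_{opt}$ is the global minimizer over the admissible set. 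The main obstacle is carrying out the differentiation cleanly; but once $G'$ is substituted, the nice factorization above makes everything transparent and sidesteps any explicit wrestling with $k$.

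\textbf{Step 4 (the two cases).} In case (i), $\rho=G(c_{opt})<\ln(10)\,LR_{target}\,\mu$, so $c_{opt}$ lies in the admissible interior, giving $T_{bc,opt}$ in \eqref{barts} and the formulas \eqref{BCopt}, \eqref{tso}, \eqref{auc2} by direct substitution. In case (ii), $\rho\ge \ln(10)\,LR_{target}\,\mu$ means $c_{opt}$ is not admissible, so $AUC$ has no interior critical point on the admissible set and the minimum is attained on the boundary $T_{bc}=0$, i.e.\ a pure bolus $d(t)=Vc^\star\delta(t)$. The constraint $LR_{max}=LR_{target}$ applied to the pure exponential-decay profile $C(t)=c^\star e^{-\mu t}$ gives, by the same change of variables as in Step~1, exactly equation \eqref{cstar} for $c^\star$, after which $T^\star$ and $AUC=c^\star/\mu$ follow immediately from \eqref{pk}.
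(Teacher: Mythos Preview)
Your proposal is correct and follows essentially the same route as the paper: compute $LR_{max}$ and $AUC$ for the bolus+continuous profile (your Step~1 is the paper's preparatory Lemma), eliminate $T_{bc}$ via the constraint, differentiate the resulting one-variable $AUC(\bar c)$, and observe that the critical-point equation reduces to \eqref{copte}. Your factorized derivative $T_{bc}(\bar c)\cdot\frac{k(\bar c)-r-\bar c\,k'(\bar c)}{k(\bar c)-r}$ is algebraically identical to the paper's expression $\frac{[\ln(10)\,LR_{target}-\mu^{-1}\phi(\bar c)][k(\bar c)-r-\bar c\,k'(\bar c)]}{(k(\bar c)-r)^2}$; the only minor difference is that the paper confirms global minimality in case~(i) by a direct integral comparison $AUC(c_{opt})<AUC(c^*)$, whereas you do it via the sign of $AUC'$ using the monotonicity of $f_r$ on either side of $c_{opt}$ (which follows from Lemma~\ref{fmax}) --- both arguments are equally valid.
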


We thus see that:

(i) If $\rho<\ln(10)\cdot LR_{target}\cdot \mu$, corresponding to sufficiently high decay rate of the antimicrobial, the optimal bolus+continuous dosing schedule maintains the {\it{same}} constant concentration $c_{opt}$ as the `ideal' concentration profile $C_{opt}(t)$ of Theorem \ref{mainr}, but for shorter
time duration. Indeed from \eqref{topte} and \eqref{barts} we have
\begin{equation}\label{dift}T_{opt}-T_{bc,opt}=\frac{\tau}{\ln(2)}\cdot\frac{\rho}{k(c_{opt})-r}>0.\end{equation}
We note also that, since $c_{opt}$ maximizes the function $f_r(c)$ given by \eqref{deff}, we have the inequality
\begin{equation}\label{inrho}\rho=\int_{zMIC}^{c_{opt}}\frac{k(u)-r}{u}du
	\leq (c_{opt}-zMIC)\cdot \frac{k(c_{opt})-r}{c_{opt}},\end{equation}
so that \eqref{dift} implies
$$0< T_{opt}-T_{bc,opt}\leq \frac{\tau}{\ln(2)}\cdot \left(1-\frac{zMIC}{c_{opt}}\right).$$
This implies that, as the antimicrobial half-life $\tau$ becomes short, $T_{bc,opt}$ 
converges to $T_{opt}$, so that the concentration profile induced by the optimal bolus+continuous schedule approaches the `ideal' optimal schedule $C_{opt}(t)$.
The $AUC$ achieved by the optimal bolus+continuous schedule will of course be higher than that obtained using the `ideal' concentration profile attaining the same log-reduction $LR_{target}$.
Indeed from \eqref{aucopt},\eqref{auc2} and \eqref{inrho} we have
\begin{equation}\label{difauc}AUC_{bc,opt}-AUC_{opt}=\frac{\tau}{\ln(2)}\cdot \left(1-\frac{\rho}{k(c_{opt})-r} \right)c_{opt}\geq \frac{\tau}{\ln(2)}\cdot zMIC.\end{equation}
As $\tau$ becomes small, \eqref{difauc} shows that the $AUC_{bc,opt}$ approaches $AUC_{opt}$.

(ii) If $\rho\geq \ln(10)\cdot LR_{target}\cdot \mu$, corresponding to a slow decay rate of the antimicrobial, the optimal dosing schedule consists of a single bolus dose raising the antimicrobial concentration to the value $c^*$
at time $t=0$. From \eqref{cstar} it follows that,
as $\mu\rightarrow 0$,
$$c^*=zMIC+\frac{\ln(10)\cdot LR_{target}}{k'(zMIC)}\cdot \mu +o(\mu),$$
so that for small $\mu$ (long antimicrobial half-life) the optimal bolus dose raises the concentration to slightly above the 
pharmacodynamic minimal inhibitory concentration $zMIC$.

To begin the analysis leading to Theorem \ref{main2}, we calculate the values 
$LR_{max}$ and $AUC$ corresponding to 
a bolus+continuous dosing schedule.

\begin{lemma}
	Consider a bolus+continuous schedule $d_{bc}(t)$ (see \eqref{ds}), with $\bar{c}>zMIC$, and the induced 
	antimicrobial concentration profile $C_{bc}(t)$ (see \eqref{BC}). Then: 
	
	(i) The maximal log-reduction corresponding to this 
	concentration profile is
	\begin{equation}\label{bclr}
		LR_{max}=\frac{1}{\ln(10)}\left[T_{bc}\cdot (k(\bar{c})-r)+\mu^{-1}\phi(\bar{c}) \right],
	\end{equation}
	where the function $\phi(c)$ is defined by:
	\begin{equation}\label{dphi}\phi(c)=\int_{zMIC}^{c}
		\frac{k(u)-r}{u}du.\end{equation}
\end{lemma}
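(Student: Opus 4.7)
The plan is to locate the maximizer of $T \mapsto LR(T)$ by a sign-of-derivative argument, then evaluate the resulting integral via a change of variables.

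First, I would differentiate $LR(T) = \frac{1}{\ln(10)}\int_0^T [k(C_{bc}(t)) - r]\,dt$ to obtain $\frac{d}{dT}LR(T) = \frac{1}{\ln(10)}[k(C_{bc}(T)) - r]$. Since $k$ is strictly increasing by (A1) and $k(zMIC) = r$ by \eqref{dzMIC}, the sign of the derivative is determined by whether $C_{bc}(T)$ exceeds $zMIC$. On $[0, T_{bc}]$ the concentration equals $\bar{c} > zMIC$, so $LR$ is strictly increasing there. On $(T_{bc}, \infty)$, $C_{bc}(t) = \bar{c}e^{-\mu(t-T_{bc})}$ is strictly decreasing and crosses $zMIC$ at the unique time
$$T^* = T_{bc} + \frac{1}{\mu}\ln\!\left(\frac{\bar{c}}{zMIC}\right),$$
so $LR$ keeps increasing on $(T_{bc}, T^*)$ and decreases on $(T^*, \infty)$. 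Therefore $LR_{max} = LR(T^*)$.

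Next I would split the integral for $LR(T^*)$ at $t = T_{bc}$. The integral over $[0, T_{bc}]$ is immediate and contributes $T_{bc}\cdot (k(\bar{c}) - r)$. For the integral over $[T_{bc}, T^*]$, I would apply the substitution $u = \bar{c}e^{-\mu(t - T_{bc})}$, giving $du = -\mu u\, dt$, i.e.\ $dt = -du/(\mu u)$, with limits $u = \bar{c}$ at $t = T_{bc}$ and $u = zMIC$ at $t = T^*$. This converts the piece into
$$\int_{T_{bc}}^{T^*}\bigl[k(\bar{c}e^{-\mu(t-T_{bc})}) - r\bigr]\,dt = \frac{1}{\mu}\int_{zMIC}^{\bar{c}} \frac{k(u) - r}{u}\,du = \mu^{-1}\phi(\bar{c}),$$
using the definition \eqref{dphi} of $\phi$. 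Adding the two contributions and dividing by $\ln(10)$ yields the stated formula \eqref{bclr}.

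The only nontrivial step is setting up the change of variables correctly (tracking the Jacobian and the reversal of limits), but once that is done the result drops out. The monotonicity argument that pins down the maximizer $T^*$ is essentially forced by (A1) together with $\bar{c} > zMIC$, and does not require invoking (A2) or (A3).
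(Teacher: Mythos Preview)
Your proposal is correct and matches the paper's own proof essentially step for step: locate $T^*$ via the sign of $LR'(T)=\frac{1}{\ln(10)}[k(C_{bc}(T))-r]$, split the integral at $T_{bc}$, and evaluate the tail by the substitution $u=\bar{c}e^{-\mu(t-T_{bc})}$. Your monotonicity argument for the uniqueness of $T^*$ is slightly more explicit than the paper's (which just uses $LR'(T)\to -r<0$), but the approach is the same.
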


(ii) The $AUC$ corresponding to this dosage schedule is
\begin{equation}\label{aucc}AUC=[\mu^{-1}+T_{bc}]\cdot \bar{c}.\end{equation}

\begin{proof}
	(i) By \eqref{LR}, the log-reduction 
	of the microbial load corresponding to 
	\eqref{BC}, at time $T$, is
	$$LR(T)=\frac{1}{\ln(10)}\cdot \int_0^T [k(C_{bc}(t))-r]dt,$$
	hence 
	$$T<T_{bc}\;\;\Rightarrow\;\; LR'(T)=k(C_{bc}(T))-r=k(\bar{c})-r>0,$$
	and
	$$\lim_{T\rightarrow \infty}LR'(T)=\lim_{T\rightarrow \infty}(k(C_{bc}(T))-r)=\lim_{T\rightarrow \infty}(k(\bar{c}e^{-\mu (T-T_{bc})})-r)=-r<0.$$
	We thus have that $LR(T)$ is an increasing function for $T<T_{bc}$, and a decreasing function for sufficiently large $T$, so that its maximum attained at some 
	$T^*>T_{bc}$ satisfying
	$LR'(T^*)=0$, that is 
	\begin{eqnarray*}k(C_{bc}(T^*))-r=0\;\;&\Leftrightarrow&\;\;
		C_{bc}(T^*)=zMIC
		\;\;\Leftrightarrow\;\;
		\bar{c}e^{-\mu (T^*-T_{bc})}=zMIC
		\\&\Leftrightarrow&\;\;
		T^*=T_{bc}+\frac{1}{\mu}\cdot \ln\left(\frac{\bar{c}}{zMIC} \right).
	\end{eqnarray*}
	Using the change of variable 
	$u=\bar{c}e^{-\mu(t-T_{bc})}$ in the integral below, we conclude that
	\begin{eqnarray*}\ln(10)\cdot LR_{max}&=&\ln(10)\cdot \max_{T>0}LR(T)=\ln(10)\cdot LR\left(T^* \right)\\
		&=&T_{bc}\cdot (k(\bar{c})-r)+\int_{T_{bc}}^{T^*} \left[k\left(\bar{c}e^{-\mu(t-T_{bc})}\right)-r\right]dt \\
		&=&T_{bc}\cdot (k(\bar{c})-r)+\frac{1}{\mu}\int_{\bar{c}e^{-\mu(T^*-T_{bc})}}^{\bar{c}} \frac{k(u)-r}{u}du \\
		&=&T_{bc}\cdot (k(\bar{c})-r)+\frac{1}{\mu}\int_{zMIC}^{\bar{c}} \frac{k(u)-r}{u}du  \\&=&T_{bc}\cdot (k(\bar{c})-r)+\mu^{-1}\phi(\bar{c}) ,
	\end{eqnarray*}
	where the function $\phi$ is defined by
	\eqref{dphi}.
	
	(ii) Using \eqref{aucd}, the $AUC$ corresponding to the concentration 
	profile generated by the dosing schedule 
	\eqref{ds} is given by
	$$AUC=V^{-1}\mu^{-1}\int_0^{\infty}d(t)dt=V^{-1}\mu^{-1}\left[V\bar{c}+\mu V\bar{c}T_{bc} \right]=[\mu^{-1}+T_{bc}]\cdot \bar{c}.$$
\end{proof}

\begin{proof}[Proof of Theorem \ref{main2}]
	
	By \eqref{bclr}, in order to achieve a given 
	log-reduction $LR_{target}$ using a dosing schedule of the form \eqref{ds}, the parameters 
	$\bar{c},T_{bc}$ defining this schedule 
	must satisfy the constraint $LR_{max}[C(t)]=LR_{target}$, or
	\begin{equation}\label{co1}T_{bc}\cdot (k(\bar{c})-r)+\mu^{-1}\phi(\bar{c}) =\ln(10)\cdot LR_{target}.\end{equation}
	We need to minimize the expression \eqref{aucc} over $(\bar{c},T_{bc})$, under the constraints 
	\eqref{co1} and \begin{equation}\label{ac}\bar{c}\geq zMIC,\;\;\; T_{bc}\geq 0.
	\end{equation}
	
	The constraint \eqref{co1} can be written
	as 
	\begin{equation}\label{bart}T_{bc} =\frac{\ln(10)\cdot LR_{target}- \mu^{-1}\phi(\bar{c})}{k(\bar{c})-r},\end{equation}
	and the inequality constraints \eqref{ac} imply that $\bar{c}$ must 
	satisfy 
	\begin{equation}\label{ine1}zMIC\leq \bar{c}\leq \phi^{-1}(\ln(10)\cdot LR_{target}\cdot \mu)=c^*,
	\end{equation}
	where $c^*$ is the solution of \eqref{cstar}.
	
	Substituting \eqref{bart} into \eqref{aucc} we get
	\begin{equation}\label{aucn}AUC=AUC(\bar{c})=\left[\mu^{-1}+ \frac{\ln(10)\cdot LR_{target}-\mu^{-1} \phi(\bar{c})}{k(\bar{c})-r}\right]\cdot \bar{c},\end{equation}
	which must be minimized over $\bar{c}$
	satisfying \eqref{ine1}. Noting that 
	the expression \eqref{aucn} goes to 
	$+\infty$ when $\bar{c}\rightarrow zMIC$,
	we see that the minimum is attained either at (a) an interior point of the interval \eqref{ine1}, or (b) at $\bar{c}=c^*$. 
	
	If $c_{opt}<c^*$, then 
	since $c_{opt}$ is the maximizer of $f_r(c)$
	given by \eqref{deff}, we have
	
	\begin{eqnarray*}&&\int_{c_{opt}}^{c^*}
		\frac{k(u)-r}{u}du<  (c^*-c_{opt})\cdot \frac{k(c_{opt})-r}{c_{opt}}\\
		&\Leftrightarrow&\;\; \phi(c^*)- \phi(c_{opt})< \frac{c^*-c_{opt}}{c_{opt}}\cdot (k(c_{opt})-r)\\
		&\Leftrightarrow&\;\;\left[\mu^{-1}+ \frac{\ln(10)\cdot LR_{target}- \mu^{-1}\phi(c_{opt})}{k(c_{opt})-r}\right]\cdot c_{opt}< \mu^{-1}c^* \\
		&\Leftrightarrow& AUC(c_{opt})<AUC(c^*),
	\end{eqnarray*}
	so that the minimum of $AUC(\bar{c})$ 
	in the interval \eqref{ine1} is attained at 
	an interior point, at which $AUC'(\bar{c})$
	must vanish, and using the fact that
	$\phi'(c)=\frac{k(c)-r}{c}$ we compute
	\begin{eqnarray*}AUC'(\bar{c})&=&
\;\;\frac{[\ln(10)\cdot LR_{target}- \mu^{-1}\phi(\bar{c})][k(\bar{c})-r-k'(\bar{c})\cdot \bar{c}]}{(k(\bar{c})-r)^2}=0\\&\Leftrightarrow&\;\;k(\bar{c})-r-k'(\bar{c})\cdot \bar{c}=0
		\;\;\Leftrightarrow\;\;\bar{c}=c_{opt}.\end{eqnarray*}
	Thus, from \eqref{bart} we get \eqref{barts}, and from \eqref{aucc} we get \eqref{auc2}. 
	
	On the other hand, if 
	$c_{opt}\geq c^*$, the above calculation shows that the derivative of $AUC(\bar{c})$ does not vanish in the interior of the interval \eqref{ine1}, so that the minimum is attained at $\bar{c}=c^*$, proving part (ii) of the theorem.
\end{proof}

\section{Optimal bolus+continuous dosing in the case of a Hill-type pharmacodynamic function}
\label{bchill}

We now apply the results of Theorem \ref{main2}
to the case in which $k(c)$ is a Hill-type function \eqref{hill}, and provide numerical
examples of the results obtained.

An explicit evaluation of the integral in \eqref{dphi} gives
$$\phi(c)=\frac{r}{\gamma}
\cdot 
\ln\left(\frac{\left(1+\left(\frac{c}{C_{50}}\right)^\gamma\right)^\alpha}{\left(\frac{c}{C_{50}}\right)^{\gamma }}\cdot \frac{(\alpha-1)^{\alpha-1}}{\alpha^\alpha}\right),$$
hence, using \eqref{copth},
$$\rho=\phi(c_{opt})=\frac{r}{\gamma}
\cdot 
\ln\left(\frac{\left(\frac{\gamma+1}{2} +\sqrt{\left(\frac{\gamma-1}{2}\right)^2+\frac{\gamma}{\alpha} }\right)^\alpha}{1+\alpha\cdot \left( \frac{\gamma-1}{2}  + \sqrt{\left(\frac{\gamma-1}{2} \right)^2+\frac{\gamma}{\alpha} }\right)}\right),$$
and $c^*$ is the solution of 
\begin{equation}\label{cse}
	\frac{\left(1+\left(\frac{c^*}{C_{50}}\right)^\gamma\right)^\alpha}{\left(\frac{c^*}{C_{50}}\right)^{\gamma }}=10^{LR_{target}\cdot \frac{\mu\gamma}{r}}\cdot \frac{\alpha^\alpha}{(\alpha-1)^{\alpha-1}}.\end{equation}
The condition
$\rho<\ln(10)\cdot LR_{target}\cdot \mu$ holds iff
\begin{equation}\label{ch}\tau<T_2\cdot \gamma\cdot \ln(10)\cdot LR_{target}
	\cdot \left[
	\ln\left(\frac{\left(\frac{\gamma+1}{2} +\sqrt{\left(\frac{\gamma-1}{2}\right)^2+\frac{\gamma}{\alpha} }\right)^\alpha}{  1+\alpha\cdot \left( \frac{\gamma-1}{2}  + \sqrt{\left(\frac{\gamma-1}{2} \right)^2+\frac{\gamma}{\alpha} }\right)}\right)\right]^{-1}.\end{equation}
We thus have:
\begin{itemize}
	\item If \eqref{ch} holds, that is the antimicrobial half-life is sufficiently short, then the solution of Problem \ref{lc}
	is the bolus+continuous schedule \eqref{ds1}, where
	$c_{opt}$ is given by \eqref{copth} and 
	$T_{bc,opt}$ is given by 
	\begin{eqnarray*}&&\frac{T_{bc,opt}}{T_2}=\frac{1}{\ln(2)(\alpha-1)}\cdot 
		\left(1+\frac{1}{\frac{\gamma-1}{2} + \sqrt{\left(\frac{\gamma-1}{2} \right)^2+\frac{\gamma}{\alpha} }}\right)\\ &\times&\left[\ln(10)\cdot LR_{target}-\frac{1}{\gamma}\cdot 
		\ln\left(\frac{\left(\frac{\gamma+1}{2} +\sqrt{\left(\frac{\gamma-1}{2}\right)^2+\frac{\gamma}{\alpha} }\right)^\alpha}{ 1+\alpha\cdot \left( \frac{\gamma-1}{2}  + \sqrt{\left(\frac{\gamma-1}{2} \right)^2+\frac{\gamma}{\alpha} }\right)}\right)\cdot \frac{\tau}{T_2}\right].\end{eqnarray*}
	Note that the value $T_{bc,opt}$ does not depend on the 
	half-saturation constant $C_{50}$.
	
	\item If the reverse inequality to \eqref{ch} holds, that is if the antimicrobial half-life is sufficiently long, then  the solution of Problem \ref{lc} is a single bolus dose $Vc^*$, where 
	$c^*$ is given by \eqref{cse}.
	
\item In the special case $\gamma=1$ (the $E_{max}$ model), the expression for 
$T_{bc,opt}$ reduces to
\begin{eqnarray*}\gamma=1\;\;\Rightarrow\;\;\frac{T_{bc,opt}}{T_2}=\frac{1}{\ln(2)(\sqrt{\alpha}-1)}\times\left[\ln(10)\cdot LR_{target}-
	\ln\left(\frac{\left(\sqrt{\alpha} +1\right)^{\alpha-1}}{\sqrt{\alpha^\alpha }}\right)\cdot \frac{\tau}{T_2}\right],\end{eqnarray*}
and the condition \eqref{ch}
reduces to positivity of the value on the right-hand side.

\end{itemize}

Tables 3,4 present numerical results regarding optimal bolus+continuous schedules, using the above formulae for parameter values which are in a range relevant to applications, which can be compared with the 
results concerning the `ideal' concentration profile
in Table 2. In Table 3 
it is assumed that the ratio of the antimicrobial 
half-life to the microbial doubling time is $\frac{\tau}{T_2}=4$, while in Table 4
we take faster antimicrobial decay, $\frac{\tau}{T_2}=2$. For all parameter values considered, the condition \eqref{ch} holds, so that the optimal schedule includes both a bolus and a continuous infusion. Comparing the $AUC$ obtained in Table 3 with the `ideal' ones in Table
2, we observe that, although, as expected, the values $AUC_{bc,opt}$attained by the optimal bolus+continuous schedules are higher than $AUC_{opt}$, in most cases they lie within $25\%$ of that value, with the exception of extreme cases of high Hill coefficient $\gamma$ and antimicrobial potency $\alpha$ (e.g. for $\gamma=5,\alpha=6$, $AUC_{bc,opt}$ is $67.5\%$ higher than $AUC_{opt}$).
For shorter antimicrobial half-lives, as in Table 4,  $AUC_{bc,opt}$ is even closer to $AUC_{opt}$. In general, we can conclude that 
for realistic parameter values, the optimal bolus+continuous dosing schedule attains outcomes which are quite close to the `ideal' one.
As an example, in Figure \ref{exa} we compare the 
`ideal' concentration curve and the concentration curve induced by the optimal bolus+continuous schedule, and the corresponding microbial population curves, using parameters fitting the antibacterial Tobramycin applied to {\it{Pseudomonas aeruginosa ATCC 27853}}, as in the example of \cite{bouvier} (see figure caption for parameter values). The 
duration of infusion in the optimal bolus+continuous schedule is $T_{bc,opt}=31.3$ hours, slightly shorter than the duration  $T_{opt}=33.0$ of the `ideal' concentration profile. For the 
`ideal' concentration profile, the bacterial population reaches the target (eradication) value $10^{-7}B_0$ at $t=T_{opt}$, while 
for the optimal bolus+continuous schedule 
the target value is reached at time $T^*=34.8$ hours (as given by \eqref{tso}), that is $3.5$ hours after 
antimicrobial infusion is ended.
The $AUC$ corresponding to the optimal bolus+continuous schedule is $5.82 \frac{mg\cdot h}{L}$, only $4\%$ higher than the `ideal' $AUC_{opt}=5.59 \frac{mg\cdot h}{L}$.

\begin{table}
	\centering
	
	\begin{minipage}{.4\linewidth}
		\centering
		{\bfseries\strut Optimal duration $\frac{T_{opt}}{T_2}$}
		\begin{tabular}{|l||*{5}{c|}}\hline
			\backslashbox{$\gamma$}{$\alpha$}
			&\makebox[2em]{2.0}&\makebox[2em]{3.0}&\makebox[2em]{4.0}
			&\makebox[2em]{5.0}&\makebox[2em]{6.0}\\\hline\hline
		0.5 & 95.7 & 59.5 & 47.0 & 40.7 & 36.7\\\hline
		1.0 & 53.5 & 28.9 & 20.2 & 15.7 & 12.8\\\hline
		2.0 & 35.4 & 17.1 & 10.8 & 7.5 & 5.5\\\hline
		3.0 & 30.3 & 14.3 & 8.8 & 6.0 & 4.3\\\hline
		4.0 & 28.0 & 13.2 & 8.1 & 5.5 & 4.0\\\hline
		5.0 & 26.8 & 12.6 & 7.8 & 5.3 & 3.9\\\hline	 
		\end{tabular}

	\end{minipage}%
	\quad 
	\bigskip
	
	\begin{minipage}{.4\linewidth}
		\centering
		{\bfseries\strut Optimal AUC $\frac{AUC_{opt}}{T_2\cdot C_{50}}$}
		\begin{tabular}{|l||*{5}{c|}}\hline
			\backslashbox{$\gamma$}{$\alpha$}
			&\makebox[2em]{2.0}&\makebox[2em]{3.0}&\makebox[2em]{4.0}
			&\makebox[2em]{5.0}&\makebox[2em]{6.0}\\\hline\hline
			0.5 & 265.7 & 46.4 & 17.6 & 9.0 & 5.5\\\hline	
			1.0 & 143.1 & 47.4 & 26.0 & 17.3 & 12.8\\\hline	
			2.0 & 84.7 & 37.6 & 24.3 & 18.2 & 14.7\\\hline	
			3.0 & 66.1 & 32.2 & 22.0 & 17.2 & 14.3\\\hline	
			4.0 & 57.1 & 29.1 & 20.5 & 16.3 & 13.8\\\hline	
			5.0 & 51.7 & 27.1 & 19.4 & 15.6 & 13.4\\\hline	
				\end{tabular}

	\end{minipage}%
	\quad 
	\bigskip 
	
			\caption{Optimal bolus+continuous dosage plans for achieving log-reduction $LR_{target}=7$, when the half-life of the antimicrobial satisfies $\frac{\tau}{T_2}=4$, for different values of $\alpha=\frac{k_{max}}{r}$ and of the Hill exponent $\gamma$.
			The concentration $c_{opt}$ to be maintained is the same as is Table 2.
			Top: Optimal duration of dosing, 
			Bottom: value of $AUC$ attained by the optimal schedule}
	
\end{table}

\begin{table}
	\centering
	
	\begin{minipage}{.4\linewidth}
		\centering
		{\bfseries\strut Optimal duration $\frac{T_{opt}}{T_2}$}
		\begin{tabular}{|l||*{5}{c|}}\hline
			\backslashbox{$\gamma$}{$\alpha$}
			&\makebox[2em]{2.0}&\makebox[2em]{3.0}&\makebox[2em]{4.0}
			&\makebox[2em]{5.0}&\makebox[2em]{6.0}\\\hline\hline			
			0.5 & 97.1 & 61.0 & 48.6 & 42.2 & 38.3\\\hline
			1.0 & 54.8 & 30.3 & 24.6 & 17.2 & 14.4\\\hline
			2.0 & 36.5 & 18.4 & 12.1 & 8.9 & 6.9\\\hline
			3.0 & 31.3 & 15.4 & 10.0 & 7.2 & 5.5\\\hline
			4.0 & 28.9 & 14.1 & 9.1 & 6.6 & 5.0\\\hline
			5.0 & 27.6 & 24.1 & 8.7 & 6.3 & 4.8\\\hline				 
		\end{tabular}

	\end{minipage}%
	\quad 
	\bigskip
	
	\begin{minipage}{.4\linewidth}
		\centering
		{\bfseries\strut Optimal AUC $\frac{AUC_{opt}}{T_2\cdot C_{50}}$}
		\begin{tabular}{|l||*{5}{c|}}\hline
			\backslashbox{$\gamma$}{$\alpha$}
			&\makebox[2em]{2.0}&\makebox[2em]{3.0}&\makebox[2em]{4.0}
			&\makebox[2em]{5.0}&\makebox[2em]{6.0}\\\hline\hline
			0.5 & 261.8 & 45.4 & 17.2 & 8.8 & 5.3\\\hline
			1.0 & 139.3 & 45.4 & 24.6 & 16.3 & 11.9\\\hline
			2.0 & 81.1 & 34.8 & 22.0 & 16.2 & 12.8\\\hline
			3.0 & 62.6 & 29.3 & 19.4 & 14.7 & 12.0\\\hline
			4.0 & 53.7 & 26.1 & 17.7 & 13.6 & 11.2\\\hline
			5.0 & 48.4 & 24.1 & 16.6 & 12.9 & 10.7\\\hline
			
			\end{tabular}

	\end{minipage}%
	\quad 
	\bigskip 
	
			\caption{Optimal bolus+continuous dosage plans for achieving log-reduction $LR_{target}=7$, when the half-life of the antimicrobial satisfies $\frac{\tau}{T_2}=2$, for different values of $\alpha=\frac{k_{max}}{r}$ and of the Hill exponent $\gamma$.
			The concentration $c_{opt}$ to be maintained is the same as is Table 2.
			Top: Optimal duration of dosing, 
			Bottom: value of $AUC$ attained by the optimal schedule}
	
\end{table}

\begin{figure}
	\begin{center}
		\includegraphics[width=0.45\linewidth]{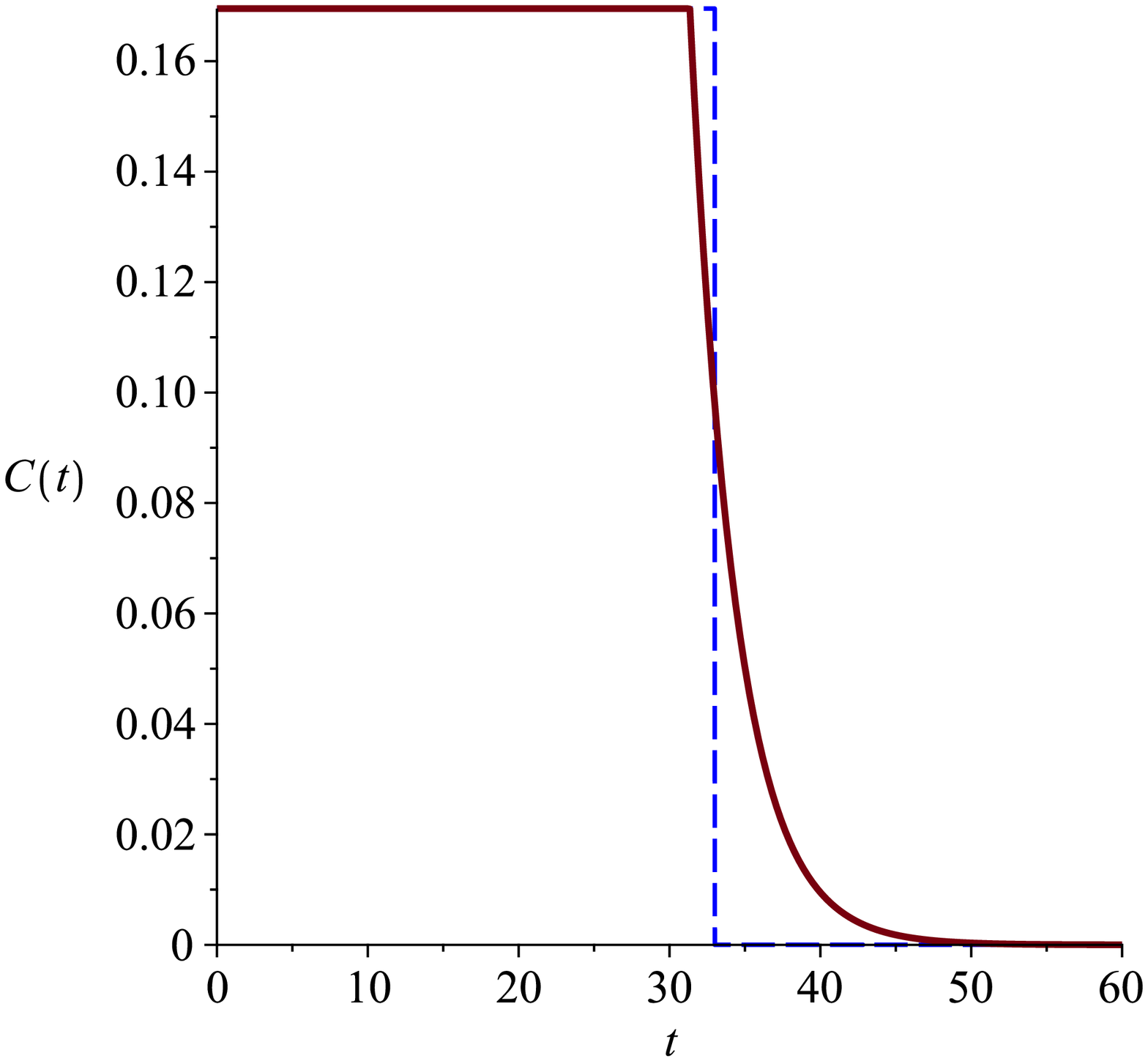}
		\includegraphics[width=0.45\linewidth]{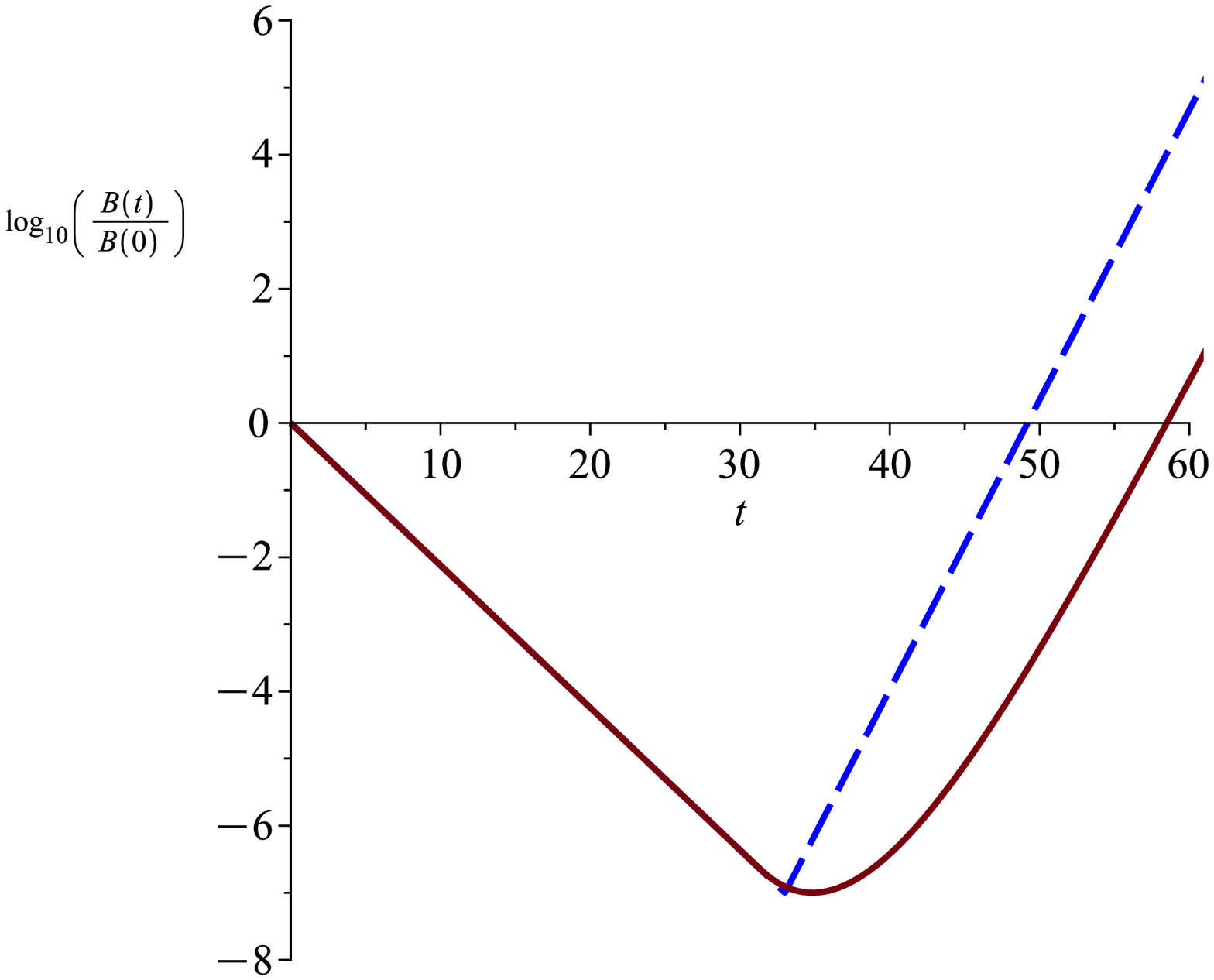}
	\end{center}
	\caption{Left: the `ideal' optimal antimicrobial concentration profile (blue dashed line), and the concentration $C(t)$ (in $\frac{mg}{L}$) at time $t$ (in hours) induced by the optimal bolus+continuous schedule (red line), Right: The corresponding microbial populations relative to the initial population. Parameters, corresponding to the example from \cite{bouvier} of Tobramycin applied to {\it{Pseudomonas aeruginosa}}, are: $C_{50}=4.187 \frac{mg}{L}$, $k_{max}=7.115 h^{-1}$, $\gamma=0.416$, $r=0.995 h^{-1}$, $\mu=0.333 h^{-1}$, $LR_{target}=7$. The `ideal' concentration profile consists of the concentration $c_{opt}=0.17 \frac{mg}{L}$ for a time duration $T_{opt}=33.0h$, while the optimal bolus+continuous dosing schedule provides a bolus dose of size $c_{opt}\cdot V=0.17\cdot V mg$ (where $V$ is the volume of distribution), followed by a continuous infusion of $c_{opt}\cdot \mu\cdot V=0.056\cdot V \frac{mg}{h}$ up to time $T_{bc,opt}=31.3h$. }
	\label{exa}
\end{figure}

\begin{figure}
	\begin{center}
		\includegraphics[width=0.45\linewidth]{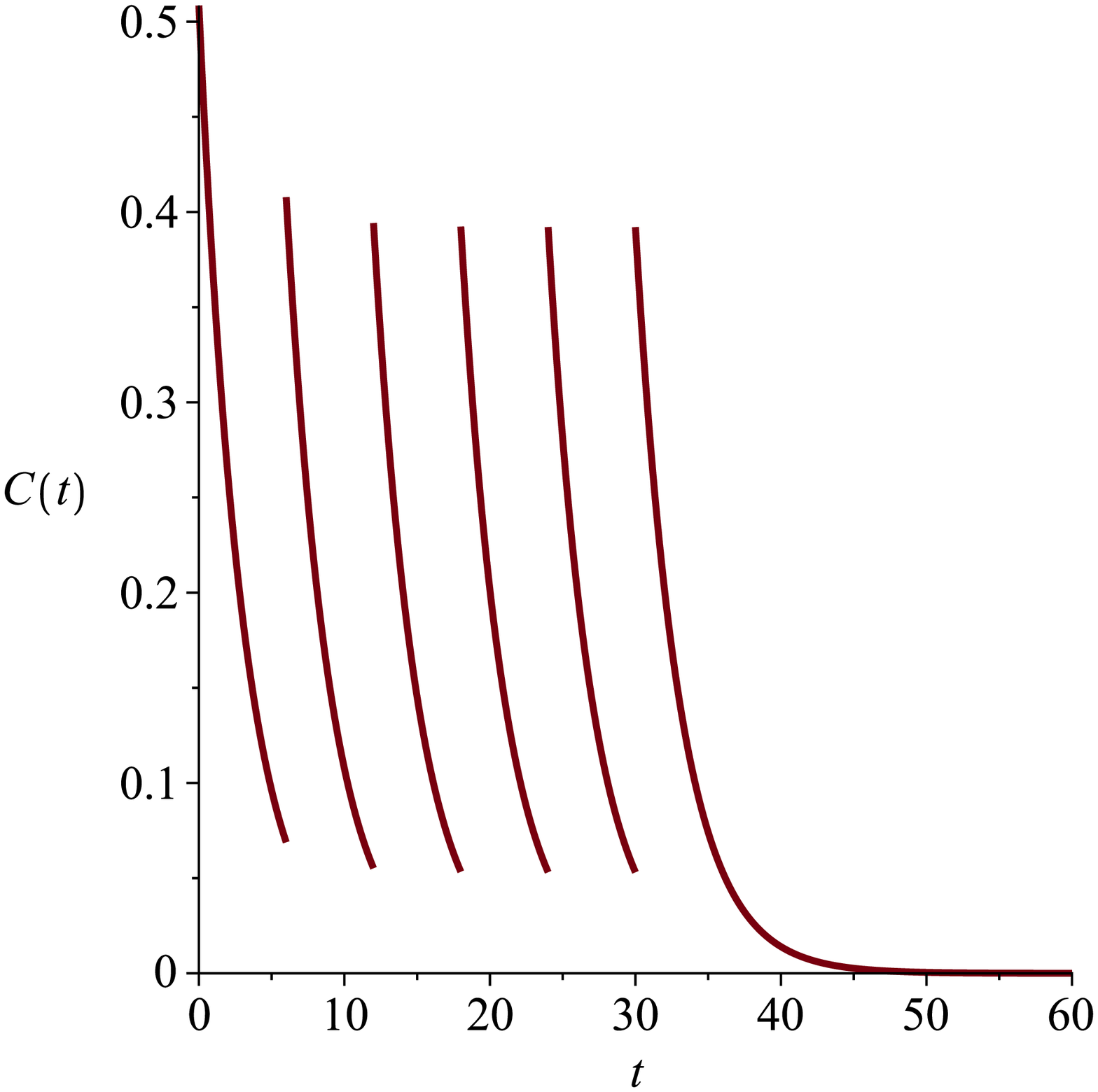}
		\includegraphics[width=0.45\linewidth]{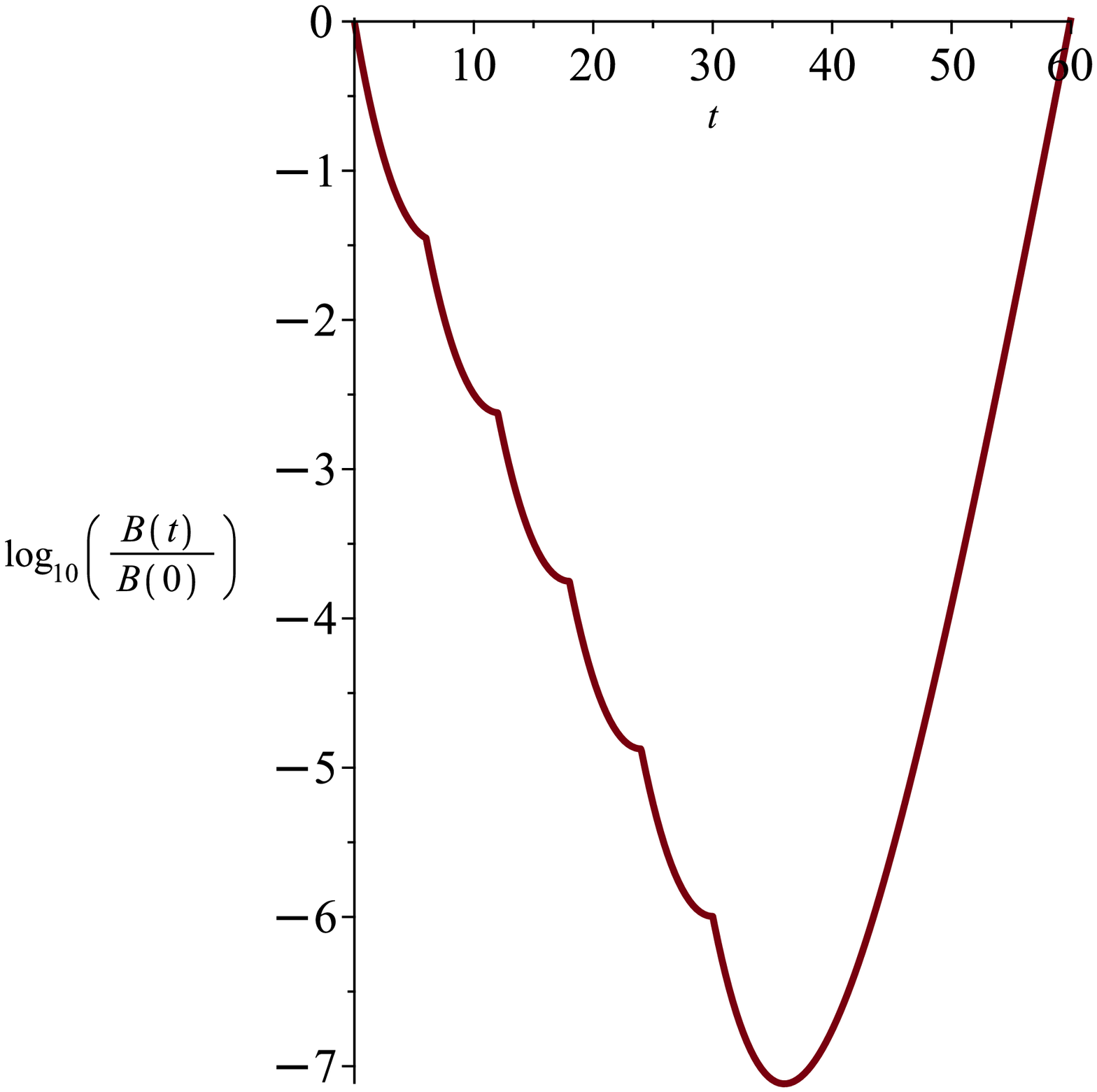}\\
		\includegraphics[width=0.45\linewidth]{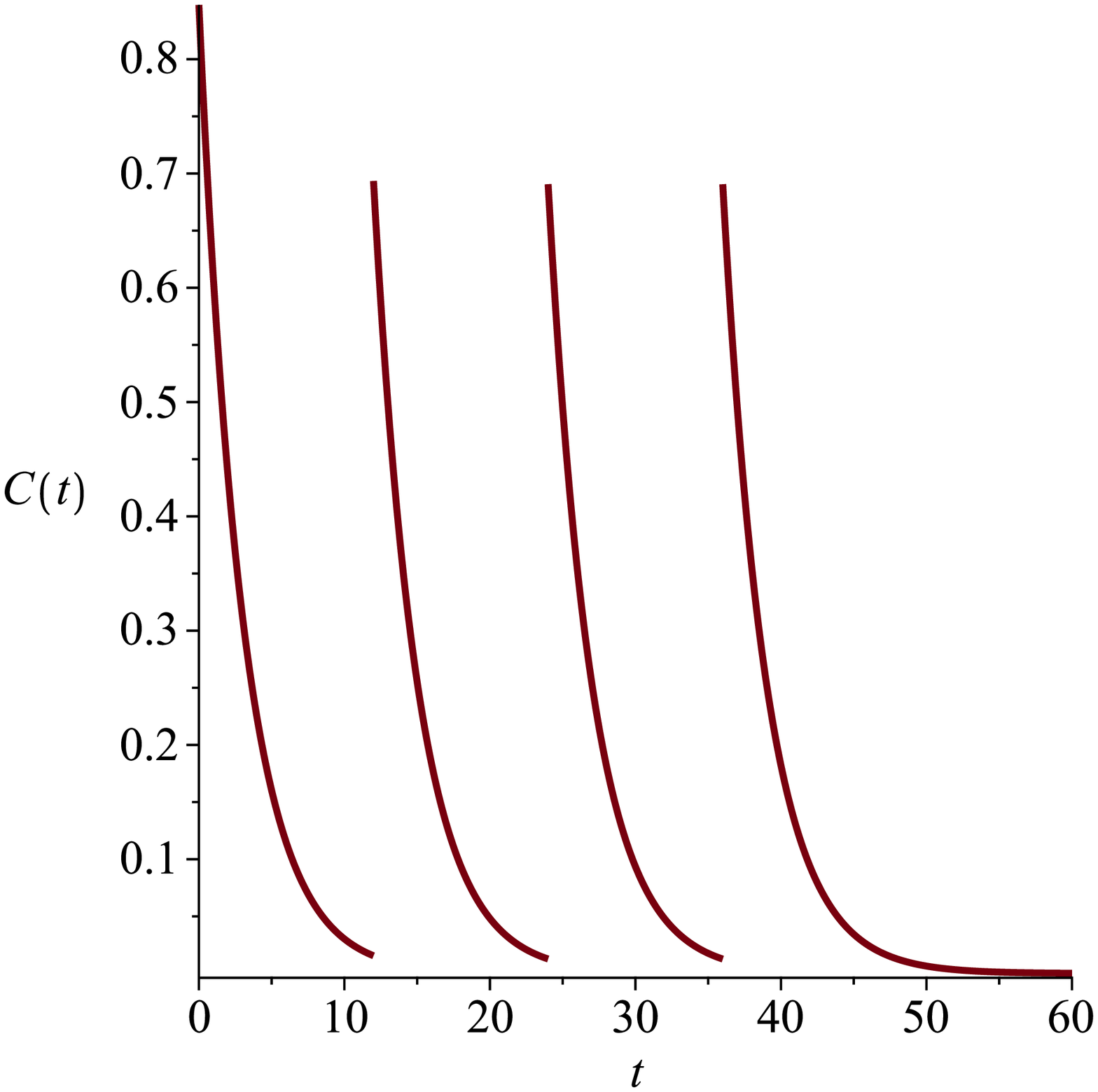}
		\includegraphics[width=0.45\linewidth]{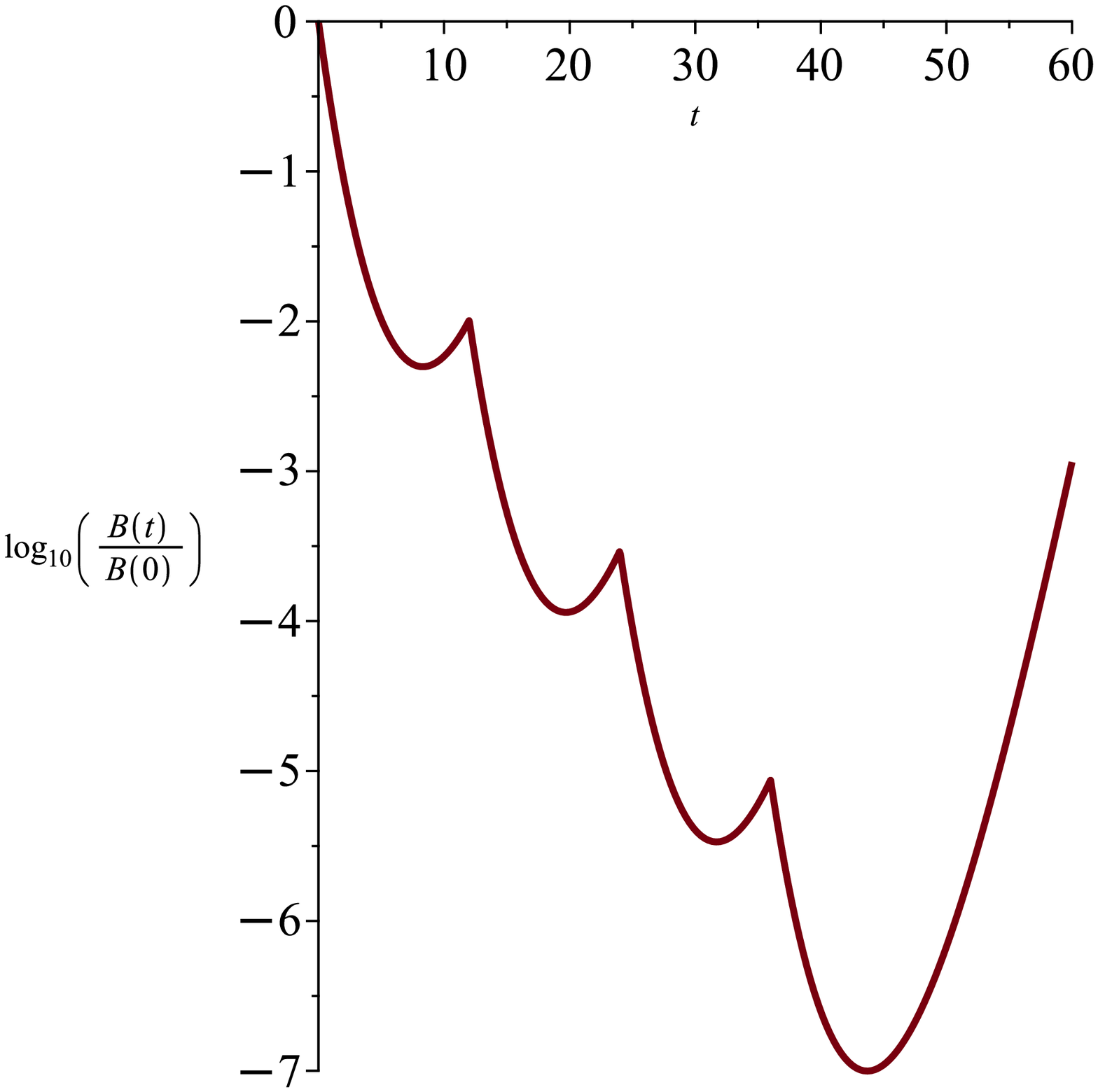}
	\end{center}
	\caption{Outcomes of a intermittent dosing schedules approximating the optimal bolus+continuous schedule presented in Figure \ref{exa} (antimicrobial concentration $C(t)$ ($\frac{mg}{L}$) at time $t$ (hours) on the left, microbial population on the right).
	Parameters are identical to those in Figure \ref{exa} (corresponding to the example from \cite{bouvier} of Tobramycin applied to {\it{Pseudomonas aeruginosa}}). 
	The dosing schedule are: Top: a loading dose of size $0.51\cdot V mg$ (where $V$ is the volume of distribution) and $5$ bolus maintenance doses of size $0.34\cdot V mg$, given at $6h$ intervals. 
	Bottom: a loading dose of size $0.85\cdot V mg$ (where $V$ is the volume of distribution) and $3$ bolus maintenance doses of size $0.68\cdot V mg$, given at $12h$ intervals.}
	\label{exa2}
\end{figure}

In practice, the administration of a continuous dosing schedule requires the use of intravenous infusion, infusion pumps, or sustained/controlled release formulations. An intermittent dosing schedule, involving a series of bolus doses, is often opted for (\cite{derendorf}). It is 
intuitively obvious, however, and can be formally proved, that sufficiently frequent intermittent infusions, with appropriate doses, can be used to approximate a bolus+continuous schedule to with arbitrary precision, so that our results concerning the optimal bolus+continuous schedule are also relevant to the design of intermittent schedules, and in particular can be used to assess the degree to which a proposed intermittent schedule can be improved upon by increasing the frequency of dosing or by shifting to continuous infusion.
To illustrate this, we have taken the optimal 
	bolus+continuous schedule presented in Figure \ref{exa}, and generated an intermittent dosing schedule which approximates it, with bolus doses given at $6$-hour intervals. 
	The maintenance doses are equal to the total doses given by the optimal 
	bolus+continuous schedule over $6$ hours, while to obtain the loading dose we add the loading dose for the bolus+continuous schedule to the above maintenance dose. The resulting dosing schedule, whose outcomes are presented in Figure \ref{exa2} (top) achieves the target $B(t)=10^{-7}B_0$ after $36.0$ hours, compared with $34.8$ hours for the optimal bolus+continuous schedule with $AUC=6.61\frac{mg\cdot h}{L}$, only $14\%$ higher than the $AUC$ corresponding to optimal bolus+continuous schedule and $\%18$ higher than the `ideal' value $AUC_{opt}$. This demonstrates that it is possible to use the results obtained here to generate intermittent schedules which achieve outcomes quite close to the theoretical ideal. However, if the interval between doses becomes larger, the performance of the intermittent approximation to the optimal bolus+continuous schedule degrades. In the bottom part of Figure \ref{exa2} we used the same procedure to generate a schedule with $12$-hours intervals. This schedule achieves the target at time $t=43.4$ hours, and its $AUC$ is $8.64 \frac{mg\cdot h}{L}$, 
	$48\%$ higher than the `ideal' $AUC_{opt}$.

\section{Discussion}
\label{discussion}

The results obtained in this work provide a 
a baseline and a reference point for evaluating the efficient use of antimicrobials. Theorem \ref{mainr} 
describes the `ideal' concentration profile
leading to eradication of the microbial population, with a minimal $AUC$ - which consists of a constant concentration 
$c_{opt}$ applied over a period of duration $T_{opt}$.
We provided simple equations allowing to compute the key quantities 
$c_{opt}$ and $T_{opt}$ for an arbitrary
pharmacodynamic function $k(c)$, and explicit expressions for these quantities in the case of the widely-used Hill-type function (see \eqref{copth},\eqref{topth}).

Since the `ideal' concentration profile
is not strictly feasible due to 
pharmacokinetic contraints, we have also considered the question of determining
an optimal bolus+continuous dosing schedule, assuming first order pharmacokinetics. 
Our results show that the optimal 
dosing leads to the {\it{same}} constant concentration $c_{opt}$ as for the `ideal' concentration profile during a dosing period 
$T_{bc,opt}<T_{opt}$. Our numerical comparisons show that the results obtained 
using this optimal bolus+continuous dosing plan are in most cases only slightly 
inferior to those obtained using the `ideal' 
concentration profile. 

We note that while the `ideal' concentration profile was proved to be optimal among {\it{all}} concentration profiles (Theorem \ref{mainr}), in the investigation of dosing plans under 
pharmacokinetic constraints we restricted ourselves in advance to bolus+continuous plans with 
constant dosing rate following the loading dose. In fact we conjecture that no 
dosing schedule with a non-constant dosing rate can improve upon the performance of the dosing plans considered, but leave a full treatment of this question to future work. 

The models which we employed in this study are 
standard ones, which are widely applied in 
the quantitative literature on antimicrobial pharmacology. 
However, as always with mathematical modelling, it
is important to take into account the limitations of the model employed, and their possible implications regarding the conclusions drawn using the model. 
In particular, since a central feature of the `ideal' concentration profile, as presented in Theorem \ref{mainr}, is that it consists of a {\it{constant}} concentration provided over a finite duration of time, one should investigate whether additional mechanisms, not taken into account by the model, might modify this conclusion. Are there mechanisms whose inclusion in the model would entail that the `ideal' concentration profile will be time-varying, and if so in what way? 
Could an intermittent dosing schedule be optimal under some circumstances? Below we 
mention several relevant mechanisms, which have been included in some models of antimicrobial 
pharmacodynamics, and whose implications for 
the optimization of dosing schedules merit  systematic study.

(1) Density-dependence of microbial population growth. This will be relevant if, before treatment begins, the microbial  population reaches a size which is close to the environmental carrying capacity, so that resource limitation leads to significant reduction of the growth rate. Several published models have included this mechanism, either 
by replacing the constant per capita growth
term in \eqref{model} by a nonlinear (typically logistic) term (\cite{bhagunde,geli,kesisoglou,nikolaou1,paterson,tindall}), or by explicitly modelling the resource dynamics (\cite{ali,khan,levin,zilonova}). We expect that when density-dependent growth is included in the model, the `ideal' concentration profile (the solution of Problem \ref{prmain}) will no longer be constant in time. However, some simulations we have carried out (not shown) indicate that practical implications of this fact are not necessarily large: as soon as the microbial population is reduced, by the antimicrobial action, so that it is not close to the carrying capacity, the linear model as in \eqref{model}  becomes a good approximation, hence the associated `ideal' concentration profile \eqref{copt1} is nearly optimal. 
This issue should be systematically explored in future work.

(2) Immune response. 
When the strength of immune supression of microbial growth is comparable to that of the antimicrobial effect, it should be taken into consideration when modelling the microbial dynamics.
In some cases immune response is modelled 
as a killing term of microbes with a 
constant per capita rate, which is
equivalent to reducing the growth-rate parameter $r$ in \eqref{model}, hence does not induce a change in the model structure (\cite{goranova}).
However, in some studies the immune reponse is modelled as depending nonlinearly on the bacterial population size (\cite{hoyle}) or as an explicit time-varying component representing the gradual build-up of immune response
(\cite{geli,tindall}). Implications of such a change to model structure with respect to the optimal design of treatment require study.

(3) Persister cells. 
One mechanism which can allow some bacterial species to survive antibiotics is phenotypic switching of cells from an antibiotic-sensitive state to a 
persister state with greatly reduced sensitivity to the antibiotic, and for which growth rate is also reduced (\cite{brauner}). The small population of persistent cells is 
therefore at a competitive disadvantage under antibiotic-free conditions, but in the presence of antibiotic it allows the bacterial population to survive.
Theorerical studies of antimicrobial 
treatment of microbial populations including persisters (\cite{cogan}, \cite{cogan1}) show that, when the transition of microbes from the persistent state back to the sensitive state does not occur in the presence of the antimicrobial, the microbial population 
cannot be eliminated by a constant application of antimicrobial, and elimination requires a series of applications of the antimicrobial, separated by withdrawl periods. The optimal antibacterial treatment schedules  
in the presence of persisters will thus of necessity be quite different from those 
presented in this work.

(4) Antimicrobial resistance. This refers to clones of microbial cells which are insestive to antimicrobial action. The effort to prevent antimicrobial resistance is an important motivation for the efficient use of antimicrobial agents, which has been studied here. However the model we employed does not directly address this issue, as do models which include a 
separate compartment for resistant strains which may arise due to mutations (\cite{ali,geli,goranova,khan,lipsitch,marrec,morsky,nielsen,paterson,singh,zilonova}). This can lead to additional considerations regarding optimization
of antimicrobial use.

We thus conclude that it is important to examine whether, how, and to what extent
the incorporation of additional mechanisms 
into the models modifies the conclusions 
obtained here. The analytical results presented here should form a useful point of reference for comparison with results obtained using modified and extended models.

It is relevant to mention here the extensive literature on the modelling and optimization of chemotherapeutic treatment in oncology (see \cite{shi} for an overview), noting that some of the basic models employed in that field have a structure similar to those used in the study of antimicrobial treatment, with microbial population size replaced by tumor size. In particular, the recent works \cite{Leszczynski}, \cite{Ledzewicz} 
treat optimization problems which are closely related to those considered here, for a more general model including nonlinear growth, both in a purely pharmacodynamic version and 
in the case that pharamcokinetics is taken into account. These works derived first-order optimalilty conditions using the Pontryagin maximum principle, and used these conditions to obtain information on the structure of optimal solutions, though they did not present explicit and detailed results for the case of linear growth, as obtained here. We note that in the present work we did not use variational tools such as the maximum principle, and our proof of the (global) optimality of the `ideal' concentration profile (Theorem \ref{mainr}) is direct and elementary.

Finally, we remark that our analyses assume 
known pharmacodynamic and pharmacokinetic parameters, but these parameters themselves will vary among individuals, and the design of treatment plans must also take 
into account this heterogeneity, a fact which has led to the development of population PK/PD modelling (\cite{develde,vinks}). Incorporating the insights obtained from the results in the present work into the population perspective remains a task for future research.

\vspace{1cm}

{}

\end{document}